\newcommand{\La}{\langle}
\newcommand{\Ra}{\rangle}
\newcommand{\Let}{\mathsf{let}}
\newcommand{\Spl}{\mathsf{spl}}
\newcommand{\As}{\mathsf{as}}
\newcommand{\In}{\mathsf{in}}
\newcommand{\If}{\mathsf{if}}
\newcommand{\Then}{\mathsf{then}}
\newcommand{\Else}{\mathsf{else}}
\newcommand{\True}{\mathsf{true}}
\newcommand{\False}{\mathsf{false}}
\newcommand{\Bool}{\mathsf{bool}}
\newcommand{\Int}{\mathsf{int}}
\newcommand{\q}{\mathsf{q}}
\newcommand{\ql}{\mathsf{li}}
\renewcommand{\qu}{\mathsf{un}}
\newcommand{\qh}{\mathsf{hi}}
\newcommand{\va}{\mathsf{v}}
\newcommand{\op}{\mathsf{o}}
\newcommand{\e}{\mathsf{e}}
\newcommand{\p}{\mathsf{p}}
\newcommand{\x}{\mathsf{x}}
\newcommand{\y}{\mathsf{y}}
\newcommand{\z}{\mathsf{z}}
\newcommand{\w}{\mathsf{w}}
\newcommand{\n}{\mathsf{n}}
\newcommand{\f}{\mathsf{f}}
\newcommand{\xs}{\mathsf{xs}}
\newcommand{\id}{\mathsf{id}}
\newcommand{\St}{\mathsf{S}}
\newcommand{\T}{\mathsf{T}}
\newcommand{\B}{\mathsf{B}}
\newcommand{\Array}{\mathsf{array}}
\newcommand{\ar}{\mathsf{a}}
\newcommand{\bv}{\mathsf{b}}
\newcommand{\Map}{\mathsf{map}}
\newcommand{\Fib}{\mathsf{fib}}
\newcommand{\Sort}{\mathsf{sort}}
\newcommand{\Swap}{\mathsf{swap}}
\newcommand{\Ins}{\mathsf{ins}}
\newcommand{\vi}{\mathsf{i}}
\newcommand{\vj}{\mathsf{j}}
\newcommand{\Pt}{\mathsf{P}}
\newcommand{\sq}{\varrho}
\newcommand{\sT}{\Upsilon}
\newcommand{\List}{\mathsf{list}}
\newcommand{\Case}{\mathsf{case}}
\newcommand{\Of}{\mathsf{of}}
\newcommand{\zs}{\mathsf{zs}}
\newcommand{\Mapa}{\mathsf{mapa}}
\newcommand{\reg}[2]{$\begin{array}{ll} \underline{#1} \\ #2 \end{array}$}
\newcommand{\regl}[3]{$\begin{array}{ll} \quad #1 \\ \quad #2\\ \overline{#3}\end{array}$}
\keywords{Linear type systems. Functional programming languages.}
\begin{document}
\title[Weak-Linear Types]{Weak-Linear Types}
\author[H.~Gramaglia]{H\'ector Luis Gramaglia}
\address{Facultad de Matem\'atica, Astronom\'ia, F\'isica y Computaci\'on - Universidad Nacional de C\'ordoba - CIEM, Centro de Investigaci\'on y Estudios de Matem\'atica}
\email{hector.gramaglia@unc.edu.ar}

\begin{abstract}
%
Computational interpretations of linear logic give rise to efficient methods of static control of memory resources: the data produced by the program are endowed through its type with attributes that determine its life cycle and guarantee safe deallocation. The use of linear types encounters limitations in practice, since linear data, in the traditional sense, do not so often appear in actual programs.
Several alternatives have been proposed in the attempt to relax the condition of linearity, adding coercions to the language to allow linear objects to be temporarily aliased.
In this work, we propose a new alternative, whose virtue is to preserve the simplicity and elegance of the original system.
\end{abstract}

\maketitle

\section{Introduction}
\label{introduccion}
In the formulae-as-types interpretation of Girard's linear logic \cite{girard}, the type of a value is not only a description of its "form", but also, in its computational interpretation, an ability to use it a certain number of times.
This refinement plays a key role in advanced type systems that are developed for a variety of purposes, including static resource management and concurrent systems.
A fundamental advantage of these systems, which Wadler calls ``no discarding'' \cite{wadler}, is the possibility of indicating, with an explicit program directive, when linear data is allocated or deallocated from memory, guaranteeing that the evaluation will not be affected.


Inspired by the works of Wadler \cite{wadler} and Cervesato and Pfenning \cite{cervesato},
the extended Linear Lambda Calculus presented by Walker in \cite{walker} condenses the linear attributes into two features: the decoration with qualifiers\footnote{The two qualifiers warn about the life cycle of a program data: an \textit{unrestricted} data remains in the store supporting multiple aliasing, while a \textit{linear} data is removed from the store after its (only) use.} and the introduction of context splitting, obtaining a conceptually transparent generalization of a classic type system.
Many works address the problem of weakening the notion of linearity for different specific purposes (Wadler \cite{wadler}; Odersky \cite{odersky}; Kobayashi \cite{kobayashi}; Smith, Walker and Morrisett \cite{smith}; Aspinall and Hofmann \cite{aspinall}; Aiken, Foster, Kodumal, and Terauchi \cite{aiken},\cite{foster}).


The main objective of this work is to present a linear applicative language whose type system supports a relaxation of the notion of linearity to allow read-only access to linear data of a base type, but which at the same time preserves the simplicity and elegance of Walker's presentation \cite{walker}.

To achieve our goal, we will introduce a third qualifier, which we will call the "hiding qualifier", and denote $\qh$. The key to this addition is that it does not add a new modality for a program data, but rather
is used internally to promote read-only access by relaxing context splitting, which manages the substructural properties of the system.
It will only be necessary to modify the context split.
As in \cite{walker}, we will use an abstract machine that will make evident the main properties related to memory management.


For a complete description of the history of substructural logics and their applications to Computer Science see \cite{walker} and \cite{dosen}.
Several works use ideas similar to the qualifier $\qh$. We can mention in this line Wadler's \textit{sequential let}  \cite{wadler}, the \textit{usage aspect} given by Aspinally Hofmann in \cite{aspinall}, the \textit{observer annotations} of Oderskyn in  \cite{odersky}, and the  \textit{quasi linear types} of Kobayashi in  \cite{kobayashi}. The qualifier $\qh$ presents similarities with the \textit{use} $\delta$ of \cite{kobayashi}, which constitutes a more general form of weakening of the linearity property.
The distinctive character of our approach is that we retain the main virtue of the formulation given by Walker in \cite{walker}: substructurality is completely captured by the introduction of context splitting, as the only modification to the classical type system.

\section{A linear applicative language}
\label{unlenguajeaplicativo}


Our linear language is built from a qualified signature $\Sigma^\q$, which is defined in Figure 1 from a heterogeneous signature $\Sigma$.

\begin{center}
\begin{tabular}{|c |}
\hline
$
\begin{array}{llll}
\q & ::= & \ql\ |\ \qu\ & \mathsf{qualifiers}\smallskip\\
\varrho & ::= & \ql\ |\ \qu\ |\ \qh& \mathsf{pseudoqualifiers}\smallskip\\
\B & ::= & \Int\ |\ \Bool\ |\  \Array\ |\ ...&  \mathsf{base\ pretypes}\smallskip\\
\tau & ::= & (\sq\ \B,...,\sq\ \B)\rightarrow\q\ \B\ &\mathsf{operators \ types}\smallskip\\
\Sigma^\q & ::= & \{(\op^\tau:\tau)\ :\ (\op:(\B_1,..., \B_n)\rightarrow\B)\in \Sigma,\  & \mathsf{qualified\ signature}\\
&   &  \qquad \qquad \qquad \qquad\qquad\ \tau=(\sq_1\ \B_1,...,\sq_n\ \B_n)\rightarrow\q\ \B \}
\end{array}
$\\
\hline
\end{tabular}
\tiny Figura 1: Qualifiers and qualified signature\end{center}


Qualifying the base types will allow us to obtain different forms of evaluation for our language. Roughly speaking, we have three modalities for a base type $\sq\ \B$
(the qualifier $\qh$ will only be used for base types in the role of input).
The \textit{unrestricted} mode, represented by $\qu\ \B$, indicates that the data can be used an unlimited number of times.
The \textit{linear} mode ($\ql\ \B$) indicates that the data will be used once (without being hidden), and the \textit{hidden} mode ($\qh\ \B$), indicates read-only use of a linear data (it is not deallocated from memory).

The abstract syntax of our language is shown in Figure 2. The abstract phrase $\x$ represents an infinite set of variables.
The phrase $\Spl\ \e\ \As\ \p\ \In\ \e'$ is introduced in \cite{walker} to extract all the components of the tuple counting only one use.

The syntax of types, expressions and contexts is given in Figure 2. As usual, we allow a given variable to appear at most once in a context.

\begin{center}
\begin{tabular}{|c |}
\hline
$
\begin{array}{ll}
\begin{array}{llll}
\Pt & ::= & \B  & \mathsf{pretypes}\\
    &     & \La \T,...,\T\Ra\\
    &     & \T\rightarrow\T\smallskip\\
\T & ::= & \q\ \Pt & \mathsf{types}\smallskip\\
\Upsilon &  ::= &  \T \ | \ \qh\ \B& \mathsf{pseudotypes}\smallskip\\
\Pi    & ::=  & [] \ |\ \Pi,\ \x\!:\!\sT \quad\qquad& \mathsf{type \ context}\smallskip\\
\p & ::= & \La \x,...,\x \Ra&  \mathsf{patterns}\\
\end{array}
&
\begin{array}{rlllll}
\e  &::= & \x & \mathsf{expressions}\\
    &     &  \op^\tau( \e,...,\e)\\
    &    & \q \ \La\e,...,\e\Ra\\
    &    & \e\ \e\\
    &    & \q \ \lambda\x:\T.\e\\
    &    &\Spl\ \e\ \As\ \p\ \In\ \e \\
    &    & \If\ \e\ \Then\  \e\ \Else\ \e \\
    &    &\Let\ \x\equiv \e\ \In\ \e\\
\end{array}
\end{array}
$\\
\hline
\end{tabular}
\tiny Figure 2: Syntax of the linear language
\end{center}


To preserve one of the invariants of linear systems we need to garantee that unrestricted data structures do not hold objects with linear types. To check this,  we define the predicate $\q(\T)$ by the following condition: $\q(\T) $ if and only if $\T=\q'\ \Pt$ and $\q\sqsubseteq\q '$. The relation $\q\sqsubseteq\q'$ is the smallest reflexive and transitive relation that satisfies $\ql\sqsubseteq\qu$. The extension of predicate $\q(\T)$ to type contexts is immediate, as long as we previously extend it to pseudotypes. This is done in a trivial way: $\q(\qh\ \B)=\True$ (refers to the fact that a hidden object has no usage restriction).

The linear type system we present below is based on the system defined by Walker in \cite{walker}. A central device of this system is the \textit{context split} $\Pi_1\circ...\circ \Pi_n=\Pi$, a $(n+1)$-ary relation defined in Figure 3. For simplicity we will define the split for $n=2 $. The reader will have no difficulty in obtaining the definition for the general case.
For convenience, we define the $(0+1)$-ary case as $\qu(\Pi)$\footnote{It is relevant in the rule for $\op^\tau$, when $\op$ is a constant symbol, that is  $\tau=\q\ \B $ (Figure 4).}.

\smallskip

\begin{center}
\begin{tabular}{|c |}
\hline
$
\begin{array}{cc}
\quad\begin{array}{c}
\\
\overline{[] \circ []\ =\ []}\\
\end{array}&
\qquad\begin{array}{c}
\\
\Pi_1\circ\Pi_2=\Pi\qquad(\q\neq\ql)\\
\overline{(\Pi_1,\x: \q\ \Pt) \circ(\Pi_2,\x:  \q\ \Pt)=\Pi,\x:  \q\ \Pt}\\
\end{array}
\\
\\
\quad\begin{array}{c}
\Pi_1\circ\Pi_2=\Pi\\
\overline{(\Pi_1,\x: \ql\ \Pt) \circ\Pi_2\ =\ \Pi,\x:  \ql\ \Pt}\quad\\
\\
\end{array}&
\qquad\begin{array}{c}
\Pi_1\circ\Pi_2=\Pi\\
\overline{\Pi_1 \circ(\Pi_2,\x: \ql\ \Pt)\ =\ \Pi,\x:  \ql\ \Pt}\\
\\
\end{array}
\end{array}
$\\
\hline
\end{tabular}
\tiny Figure 3: Context split
\end{center}


But the context split, which is suitable for the typing of terms, is not suitable for the typing of expressions in general. By typing these, we must generate the possibility of a hidden use of a data as input of a basic operation.
For this we define the \textit{context pseudosplit}, for which we will use the $\sqcup$ operator. Its definition coincides with the definition of the context split, except in the case of a linear base type (that is, a type of the form $\ql\ \B$). In this case the occurrence of $\x:\ql\ \B$ in the $i$-th context is preceded by occurrences of $\x$ as a hidden object. In the following rule $j$ takes the values $1,...,n$.

\begin{center}
$
\begin{array}{c}
\Pi_1\sqcup...\sqcup\Pi_n=\Pi\\
\overline{(\Pi_1,\x: \qh\ \B) \sqcup... \sqcup(\Pi_{j-1},\x: \qh\ \B)\sqcup(\Pi_{j},\x: \ql\ \B)\sqcup\Pi_{j+1}\sqcup...=\Pi,\x:  \ql\ \B}\\
\end{array}
$
\end{center}

To express the fact that an argument of a basic operator can be an expression of type $\q\ \B_i$, or a variable of pseudotype $\qh\ \B_j$, we introduce the \textit{pseudotyping relation} $\ \Pi\Vdash\e:\T$ as the extension of the relation $\Pi\vdash\e:\T$ with the following rule: $\qu(\Pi_1,\Pi_2)$ implies $\Pi_1,\x:\qh\ \B,\Pi_2\Vdash\x:\qh \ \B$.

The rules of the linear type system are given in Figure 4.

\begin{center}
\begin{tabular}{|c |}
\hline
$
\begin{array}{ll}
\begin{array}{l}
\\
\quad\qu(\Pi_1)\\
\quad\qu(\Pi_2)\\
\overline{\Pi_1, \x :\q\ \B,\Pi_2\vdash \x :\q\ \B}\\
\end{array}
&
\begin{array}{l}
\\
\quad\tau=\La\varrho_1\ \B_1,...,\varrho_n\  \B_n\Ra\rightarrow\q\ \B \\
\quad\Pi_i \Vdash \e_i:\sq_i\ \B_i\\
\overline{\Pi_1\circ...\circ\Pi_n\vdash \op^\tau(\e_1,...,\e_n):\q\ \B}\\
\end{array}
\bigskip\\
\begin{array}{l}
\quad\Pi_1\vdash \e:\q\ \La\T_1,...,\T_n\Ra\\
\quad\Pi_2,\x_1:\T_1,...,\x_n:\T_n\vdash \e':\T'\\
\overline{\Pi_1\sqcup\Pi_2\vdash\Spl\ \e\ \As\ \La\x_1,...,\x_n\Ra\ \In\ \e':\T'}\qquad\\
\end{array}
&
\begin{array}{l}
\quad\Pi_1\vdash \e_1:\T_1\quad \q(\T_1)\ ...\\
\underline{\quad\Pi_n\vdash \e_n:\T_n\quad \q(\T_n)\qquad \quad\quad}\\
\Pi_1\sqcup...\sqcup\Pi_1\vdash \\
\quad\q\ \La\e_1,...,\e_n\Ra:\q\ \La\T_1,,...,\T_n\Ra\\
\end{array}
\bigskip\\
\begin{array}{l}
\quad\q(\Pi)\qquad \\
\quad\Pi,\x:\T\vdash \e:\T'\\
\overline{\Pi\vdash \q\ (\lambda \x:\T.\ \e) :\q\ (\T\rightarrow\T')}\\
\end{array}
&
\begin{array}{l}
\quad\Pi_1\vdash \e:\q\ \Bool\\
\quad\Pi_2\vdash \e_i:\T\\
 \overline{\Pi_1\sqcup\Pi_2\vdash \If\ \e\ \Then\ \e_1\ \Else\ \e_2:\T}
\end{array}
\bigskip\\
\begin{array}{l}
\quad\Pi_1\vdash \e:\T\\
\quad\Pi_2,\x:\T\vdash \e':\T'\qquad\qquad\\
\overline{\Pi_1\sqcup\Pi_2\vdash
\Let\ \x\equiv \e\ \In\ \e':\T'}\\
\\
\end{array}
&
\begin{array}{l}
\quad\Pi_0\vdash \e_0:\T\\
\quad\Pi_1\vdash \e_1:\T\rightarrow\T'\\
\overline{\Pi_0\sqcup\Pi_1\vdash \e_1\ \e_0:\T'}\\
\\
\end{array}
\end{array}
$\\
\hline
\end{tabular}\\
\tiny Figure 4: $\Pi\vdash\e:\T$.
\end{center}

Prevalues, values and store are defined in Figure 5. By $\op$ (without arguments) we denote the constants of $\Sigma$, that is, the function symbols of arity 0.

\begin{center}
\begin{tabular}{|c |}
\hline
$
\begin{array}{ll}
\begin{array}{llll}
\\
\w & ::= & \op \ |\   \La\x,...,\x\Ra\ |\ \lambda\x\!:\!\T.\e & \mathsf{prevalues}\smallskip\\
\va & ::= &  \q\ \w \quad & \mathsf{values}\\
\\
\end{array}
\qquad\qquad&
\begin{array}{llll}
\St & ::= & [] &\mathsf{Stores}\\
& &  \St,\x=\va\\
\end{array}
\end{array}
$\\
\hline
\end{tabular}\\
\tiny Figure 5: Prevalues, values and store.
\end{center}

The rules for typing the store are given in Figure 6. We must add to the rules given in \cite{walker} the rule $\mathsf{(shi)}$ for hidden data. The relation $\vdash (\St,\e)$, which indicates that the pair formed by the store and the program are suitable to be evaluated, is also defined in Figure 6.

\medskip

\begin{center}
\begin{tabular}{|c |}
\hline
$
\qquad\begin{array}{ll}
\mathsf{(sem)}
\begin{array}{c}
\underline{\quad\quad\quad\quad}\\
\ \vdash []:[]\\
\end{array}
\qquad\ \
&
\mathsf{(shi)}\begin{array}{c}
\\
\underline{\vdash \St: \Pi\qquad (\va:\ql\ \B)\in\Sigma^\q}\qquad\\
\vdash (\St,\x=\va):(\Pi,\x:\qh\ \B)\\
\\
\end{array}
\medskip\\
\mathsf{(sva)}\begin{array}{c}
\underline{\vdash \St: \Pi_1\circ\Pi_2\qquad \Pi_1\vdash \va:\T}\\
\vdash (\St,\x=\va):(\Pi_2,\x:\T)\\
\end{array}
\quad&
\qquad
\begin{array}{c}
\underline{\vdash\St: \Pi\qquad \Pi\vdash \e:\T}\\
\vdash (\St,\e)\\
\end{array}
\medskip\\
\end{array}
$\\
\hline
\end{tabular}\\
\tiny Figure 6:  Relations $\vdash\St:\Pi$ and $\vdash(\St,\e)$
\end{center}

\subsection{Small-step semantic}
\label{semanticassmallstep}

Different ways of qualifying the operators $\op^\tau$, with $\tau=( \sq_1\ \B_1,...,\sq_n\ \B_n)\rightarrow\q\ \B$,
will give rise to various forms of evaluation, which will differ in the form of manage memory resources. Note that there is an overspecification of qualifiers: in a traditional linear system, the information given by $\sq_1,...,\sq_n$ would not be necessary, since the qualifier of the stored value determines its lifecycle. In our system, the pseudoqualifiers $\sq_1,...,\sq_n$ enable the possibility that a linear value is not discarded when it corresponds to an input of type $\qh\ \B$.

\textit{Evaluation context} $\e[]$ and the \textit{context rule} (cct) are defined in Figure 7.

\begin{center}
\begin{tabular}{|c |}
\hline
$
\begin{array}{l}
\begin{array}{lllllll}
(\St_0,\e_0) \rightarrow_\beta (\St_1, \e_1)\\
\overline{(\St_0, \e[\e_0]) \rightarrow (\St_1,\e[\e_1])}\\
\end{array}
\qquad\qquad\begin{array}{lllllll}
\e[] &::=  & []  & (\textsf{context hole)}\\
 &    & \op^\tau(\x,...,\e[],...,\e)\\
&     & \q\  \La \x,...,\e[],..,\e\Ra\\
     &     & \x\ \e[]\\
     &    &\If\ \e[]\ \Then\ \e\ \Else \ \e\qquad\\
     &     &\Spl\ \e[]\ \As\ \p\  \In\ \e\\
     &     &\Let\ \x\equiv\e[]\ \In\ \e\\
\end{array}\\
\end{array}
$\\
\hline
\end{tabular}
\tiny Figure 7: Contexts evaluation and contexts rule.
\end{center}

To define the small-step semantics we will use the context-based semantics used in \cite{walker}. Its distinctive characteristic is the explicit management of the store $\St$, for which we assume that no variables are repeated, and that when extending it a new variable is used, supplied by $new\ \St$.

A sequence of the form $\x_1\mapsto\y_1,...,\x_n\mapsto\y_n$ will denote a substitution in the usual way: $\x_1\mapsto\y_1$ denotes the identity map, modified in the variable $ \x_1$, where it takes the value $\y_1$.
Furthermore, if $\delta$ is a substitution, then the modified substitution $(\delta,\x\mapsto \y)$ is defined by the conditions $(\delta,\x\mapsto \y)\ \x = \ y$, and $ (\delta,\x\mapsto \y)\ \z = \delta\z$ if $\z\neq\x$. We often will write $\La\x_1,...,\x_n\Ra\mapsto\La\x'_1,...,\x'_n\Ra$ instead of $\x_1\mapsto\x_1,...,\x_n\mapsto\x_n$.

To represent memory deallocation we will use the operator $\sim_{\sq_1,...,\sq_n}$, defined by the following conditions:
\medskip

$
\begin{array}{rlll}
(\St,\x=\va,\St')\sim_\ql\x & = &\St,\St'\\
\St\sim_\sq\x & = &\St\qquad(\sq\neq\ql)\\
\St\sim_{[]}\mathsf{[]} & = &\St\\
\St\sim_{\sq,\sq\mathsf{s}}\x,\mathsf{xs} & = &(\St\sim_\sq\x)\sim_{\sq\mathsf{s}}\mathsf{xs}\\
\end{array}
$

\medskip

Terminal configurations will be pairs of the form $(\St,\x)$. The rules of small-step semantics are given in Figure 8.
We can observe in the rule $\mathsf{(eop)}$ that the reading of the qualifier of each input is done from the specification $\tau$, and not from the store. In this way, hidden data is prevented from being deleted.

\begin{center}
\begin{tabular}{|c |}
\hline
$
\begin{array}{lll}
\mathsf{(eva)} & (\St,\va) \rightarrow_\beta(\St,\x=\va,\x)& (\x=new\ \St)\smallskip\\
\mathsf{(eop)} & (\St,\op^\tau(\x_1,...,\x_n)) \rightarrow_\beta&(\St\x_i=\sq'_i\ \w_i,\ \x=new\ \St,    \\
&\qquad\qquad(\St\sim_{\sq_1,...,\sq_n}\x_1,...,\x_n,\x=\op(\w_1,...,\w_n),\x)\quad&\ \tau =( \sq_1\ \B_1,...\ )\rightarrow\q\ \B)\smallskip\\
\mathsf{(eif)} & (\St, \If\ \x\ \Then\ \e_0\ \Else\ \e_1)\rightarrow_\beta(\St\sim_\q\x,\e_0) & (\St\x=\q\ \True)\smallskip\\
\mathsf{(eif)} & (\St, \If\ \x\ \Then\ \e_0\ \Else\ \e_1)\rightarrow_\beta(\St\sim_\q\x,\e_1) & (\St\x=\q\ \False)\smallskip\\
\mathsf{(esp)} & (\St, \Spl\ \x\ \As\ \p\ \In\  \e) \rightarrow_\beta (\St\sim_\q\x, [\p\mapsto\p']\e) &(\St\x=\q\ \p')\smallskip\\
\mathsf{(efu)} & (\St, \f\ \x') \rightarrow_\beta (\St\sim_\q\f, [\x\mapsto\x']\e) & (\St \f=\q\ \lambda\x:\T.\e)\smallskip\\
\mathsf{(ele)} & (\St, \Let\ \x\equiv \y\ \In\  \e) \rightarrow_\beta
(\St,[\x\mapsto\y]\e), & \smallskip \\
\end{array}
$\\
\hline
\end{tabular}
\tiny Figure 8: Evaluation Rules
\end{center}

\section{Preservation and progress}
\label{preservacionyprogreso}


The fact of using two different splits is justified by the need to guarantee that the small-step semantics preserve the typing. Indeed, suppose we use pseudosplit of contexts in the rule for $\op^\tau$. Then we could prove $\vdash (\x=\ql\ 3,\y=\ql\ 1;\x+^\tau\x*^{\sigma}\y)$, with $\tau = \La\qh\ \Int,\ql\ \Int\Ra\rightarrow\ql\ \Int$ and $\sigma= \La\ql\ \Int,\ql\ \Int\Ra\rightarrow\ql\ \Int$.
But in this case there would be no preservation (nor progress), since $\vdash (\z=\ql\ 2,;\x+^\tau\z)$ could not be proven. On the other hand, expressions like $\La \x,\x*^{\sigma}\y \Ra$ show the need for $\qh\ \B$ to be a pseudotype (type that only occurs as input to an operator).

We now give some notation about contexts that we will use in this section. Given a statement $\mathcal{S}(\Pi)$ that refers to a generic context $\Pi$, we use $\mathcal{S}(\Pi_1\circ...\circ\Pi_n)$ to denote the existence of a context $\Pi$ such that $\Pi = \Pi_1\circ...\circ\Pi_n$ and $\mathcal{S}(\Pi)$.
We will say that the context $\Pi_2$ is \textit{complementary to} $\Pi_1$ if there exists $\Pi $ such that $\Pi=\Pi_1\circ\Pi_2$. Note that for each context $\Pi$ there exists only one context $\Pi^\qu$ that satisfies $\Pi=\Pi\circ\Pi^\qu=\Pi^\qu\circ\Pi$. This context is denoted by $\Pi^\qu$  since it is the subcontext of $\Pi$ formed by the variables whose types $\T$ satisfy $\qu(\T)$.

\begin{lem}[Well-typing] If  $\vdash \St:\Pi$, then:
\begin{enumerate}
    \item $\Pi\x=\q\ \B$ implies $\St\x=\q\ \bv$, with $(\bv:\B)\in\Sigma$.
    \item  $\Pi\x=\qh\ \B$ implies $\St\x=\ql\ \bv$, with $(\bv:\B)\in\Sigma$.
    \item $\Pi\x=\q\ \La\T_1,...,\T_n\Ra$ implies $\St\x=\q\ \La\x_1,...,\x_n\Ra$, for some variables $\x_1,...,\x_n$.
    \item $\Pi\x=\q\ (\T\rightarrow\T')$ implies $\St\x=\q\ (\lambda\y:\T.\e)$, for some $\y,\e$.
\end{enumerate}
\end{lem}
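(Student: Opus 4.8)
The plan is to argue by induction on the derivation of $\vdash\St:\Pi$, whose last rule is one of $\mathsf{(sem)}$, $\mathsf{(shi)}$, $\mathsf{(sva)}$ (Figure 6). In the base case $\mathsf{(sem)}$ we have $\St=\Pi=[]$, so all four implications hold vacuously. In the two inductive cases the store is $\St_0,\x=\va$ and the context is $\Pi_0,\x:\sT_0$ for a fresh variable $\x$ (using that neither $\St$ nor $\Pi$ repeats a variable); the task then splits into verifying the four clauses for the new entry $\x$, and deriving them for an older variable $\y\ne\x$ from the induction hypothesis applied to the premise $\vdash\St_0:\Pi'$.

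For the older variables I would first dispatch the context bookkeeping. In $\mathsf{(shi)}$ the premise is $\vdash\St_0:\Pi_0$, and for $\y\ne\x$ both $\Pi\y=\Pi_0\y$ and $\St\y=\St_0\y$, so the induction hypothesis transfers verbatim. In $\mathsf{(sva)}$ the premise is $\vdash\St_0:\Pi_1\circ\Pi_2$ with $\Pi_0=\Pi_2$; inspecting the context-split rules (Figure 3) one sees that every entry of $\Pi_2$ occurs in $\Pi_1\circ\Pi_2$ with an unchanged type (a linear entry of $\Pi_2$ is copied into the split, a non-linear entry shared with $\Pi_1$ keeps its type), so whenever $\Pi_2\y$ is defined it equals $(\Pi_1\circ\Pi_2)\y$ and the induction hypothesis again gives all four clauses for $\y$. (The domain of $\Pi$ need not exhaust that of $\St$ — a linear variable of $\Pi_1$ reachable only from $\va$ disappears from the split context — but the statement concerns only variables in the domain of $\Pi$, so this is harmless.)

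It then remains to handle the new entry $\x$, which is where the content of the lemma lies. If the last rule is $\mathsf{(shi)}$ then $\sT_0=\qh\ \B$, and the side condition $(\va:\ql\ \B)\in\Sigma^\q$ says $\va=\ql\ \bv$ with $(\bv:\B)\in\Sigma$; this is precisely clause 2, and clauses 1, 3, 4 do not apply since $\qh\ \B$ is not of the form $\q\ \Pt$. If the last rule is $\mathsf{(sva)}$ then $\sT_0=\T$ is a genuine type (so $\sT_0\ne\qh\ \B$ and clause 2 is vacuous) and we have a derivation $\Pi_1\vdash\va:\T$. Since $\va$ is a value, i.e. $\va=\q\ \w$ for a prevalue $\w$, I would conclude by inversion on that derivation, a canonical-forms analysis using the rules of Figure 4: the abstraction rule forces $\w=\lambda\y:\T_0.\e$ and $\T=\q\ (\T_0\rightarrow\T')$ (clause 4); the tuple-introduction rule forces $\w=\La\x_1,\ldots,\x_n\Ra$ and $\T=\q\ \La\T_1,\ldots,\T_n\Ra$ (clause 3); and the operator rule forces $\w$ to be a nullary symbol $\bv$ with $(\bv:\B)\in\Sigma$ and $\T=\q\ \B$ (clause 1). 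In every case the qualifier displayed on $\w$ is the outer qualifier of $\T$, exactly as the clauses assert; the variable rule and the extra rule defining $\Vdash$ are irrelevant because a value is never a bare variable.

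I expect the only delicate step to be this last inversion, and inside it the base-pretype case: one has to observe that among the prevalues only a $\Sigma$-constant can carry a base pretype (a tuple has a tuple type, an abstraction an arrow type, and an application $\op^\tau(\e_1,\ldots,\e_n)$ with $n>0$ is not a value), and that in the nullary instance of the operator rule the qualifier recorded in $\tau$ is exactly the one carried by the stored value. Everything else — the split bookkeeping above and the tuple/abstraction subcases — is routine.
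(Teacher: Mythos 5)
Your proof is correct and takes essentially the same route as the paper's: the paper phrases its induction as induction on the part of the context to the right of $\x$ (equivalently, on the store-typing derivation), with all the content concentrated in the case where $\x$ is the most recently added binding, which is settled by inverting the last rule of Figure 6 and reading off the canonical form of $\va$ from $\Pi_1\vdash\va:\T$ (or from the side condition of $\mathsf{(shi)}$ for clause 2). The paper leaves the bookkeeping for older variables and the base-pretype inversion implicit where you spell them out, but the argument is the same.
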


\begin{proof}
Suponga que $\Pi=\Delta,\x\!:\!\sT,\Delta'$. Hacemos induccion sobre $\Delta'$. Claramente la dificultad esta en el caso base, el paso inductivo es trivial. 

\bigskip

\noindent Caso $\sT=\qh\ \B$: Directo de \reg{\vdash \St: \Pi\quad (\va:\ql\ \B)\in\Sigma^\q\ \ }
{\vdash (\St,\x=\va):(\Pi,\x:\qh\ \B)}

\bigskip

\noindent  Caso $\sT=\q\ \B$: Directo de
\reg{\vdash \St: \Pi_1\circ\Pi_2\quad \Pi_1\vdash \va:\q\ \B}
{\vdash (\St,\x=\va):(\Pi_2,\x:\q\ \B)}.
Here $\Pi_1=\Delta^\qu$ and $\Pi_2=\Delta$.

\bigskip
\noindent  Caso $\sT=\q\ \La\T_1,...,\T_n\Ra$: Directo de
\reg{\vdash \St: \Pi_1\circ\Pi_2\quad \Pi_1\vdash \va:\q\ \La\T_1,...,\T_n\Ra}
{\vdash (\St,\x=\va):(\Pi_2,\x:\q\ \La\T_1,...,\T_n\Ra)}. 

Here $\Pi_1\vdash \va:\q\ \La\T_1,...,\T_n\Ra$ implies $\va=\q\ \La\x_1,...,\x_n\Ra$.

\bigskip
\noindent  Caso $\sT=\q\ (\T_1\rightarrow\T_2)$: Directo de
\reg{\vdash \St: \Pi_1\circ\Pi_2\quad \Pi_1\vdash \va:\q\ (\T_1\rightarrow\T_2)}
{\vdash (\St,\x=\va):(\Pi_2,\x:\q\ (\T_1\rightarrow\T_2))}. 

Here $\Pi_1\vdash \va:\q\ (\T_1\rightarrow\T_2)$ implies $\va=\q\ \lambda\x\!:\!\!\T.\e$.
\end{proof}

We call $\beta$-\textit{node} the expressions that occur in the rules of $\rightarrow_\beta$ (Figure 8), that is, expressions of one of the following forms: $\va$, $\op^\tau(\x_1,...,\x_n)$, $\If\ \x\ \Then\ \e_0\ \Else\ \e_1$, $\Spl\ \x\ \As\ \p\ \In\  \e$, $\x\ \x'$ and  $\Let\ \x\equiv \y\ \In\  \e$.

\begin{cor}[$\beta$-progress]If $\vdash(\St,\e)$ and $\e$ is a $\beta$-node, then there exist  $\St',\e'$ such that  $(\St,\e)\rightarrow(\St',\e')$. \qed 
\end{cor}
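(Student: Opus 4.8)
The plan is to unfold the hypothesis $\vdash(\St,\e)$ by the last rule of Figure~6, obtaining a context $\Pi$ and a type $\T$ with $\vdash\St:\Pi$ and $\Pi\vdash\e:\T$, and then to proceed by cases on the shape of the $\beta$-node $\e$: in each case I would exhibit a rule of $\rightarrow_\beta$ (Figure~8) that applies to $(\St,\e)$, and conclude $(\St,\e)\rightarrow(\St',\e')$ by the context rule (cct) instantiated at the empty evaluation context $[\,]$, which turns any $\beta$-step into a $\rightarrow$-step. Two of the cases are settled at once, since their $\rightarrow_\beta$ rules carry no side condition: (eva) for $\e=\va$, and (ele) for $\e=\Let\ \x\equiv\y\ \In\ \e'$.

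For the remaining four shapes the key is that the typing hypothesis forces $\St$ to bind the relevant variables to values of exactly the form required by the matching $\rightarrow_\beta$ rule, and this is precisely what the Well-typing lemma provides. Each of these four nodes is headed by a constructor typed by a single rule of Figure~4, so from $\Pi\vdash\e:\T$ I would read off a (pseudo)split $\Pi=\Pi_1\circ\cdots\circ\Pi_n$ or $\Pi=\Pi_1\sqcup\cdots\sqcup\Pi_n$ together with the pseudotype of each variable of $\e$ in the relevant component. I would then record, by inspection of Figure~3 and of the pseudosplit clause, the elementary fact that if a variable has in some component a pretype that is a base type, a tuple $\La\T_1,\ldots,\T_n\Ra$, or an arrow, then it has in $\Pi$ a pretype of the same shape --- and, for tuples, of the same arity, the extra pseudosplit clause touching only base types --- even though its qualifier may change and, in the base case, may become $\qh$. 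Given this: for $\e=\op^\tau(\x_1,\ldots,\x_n)$ each $\x_i$ has in $\Pi$ a base pretype, so by items~(1)--(2) of the Well-typing lemma $\St\x_i=\q'_i\ \bv_i$ with $(\bv_i:\B_i)\in\Sigma$ and (eop) applies (for a constant symbol $n=0$ and (eop) applies with nothing to check); for $\e=\If\ \x\ \Then\ \e_0\ \Else\ \e_1$, $\x$ has pretype $\Bool$ in $\Pi$, so $\St\x=\q'\ \bv$ with $\bv\in\{\True,\False\}$ and one of the (eif) rules applies; for $\e=\Spl\ \x\ \As\ \La\x_1,\ldots,\x_n\Ra\ \In\ \e'$, $\x$ has in $\Pi$ a tuple pretype of arity $n$, so by item~(3) $\St\x=\q'\ \La\x'_1,\ldots,\x'_n\Ra$, the substitution $[\La\x_1,\ldots,\x_n\Ra\mapsto\La\x'_1,\ldots,\x'_n\Ra]$ is well formed, and (esp) applies; for $\e=\f\ \x'$, $\f$ has an arrow pretype in $\Pi$, so by item~(4) $\St\f=\q'\ \lambda\y\!:\!\T''.\,\e''$ and (efu) applies.

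The argument is thus routine bookkeeping layered on top of the Well-typing lemma. The one point I expect to require care is the shape-preservation claim about (pseudo)splits, and in particular the preservation of tuple arity, since that is exactly what makes the pattern of an $\Spl$ $\beta$-node match the length of the tuple that the Well-typing lemma extracts from the store, so that rule (esp) genuinely applies.
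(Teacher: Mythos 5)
Your proposal is correct and matches the paper's intent exactly: the paper states this as an immediate corollary of the Well-typing lemma (with no written proof), and your case analysis on the shape of the $\beta$-node, using items (1)--(4) of that lemma to show the store binds each scrutinized variable to a value of the form required by the corresponding rule of Figure~8, is precisely the routine verification the paper leaves implicit. The one point you flag as needing care --- that (pseudo)splitting preserves the pretype shape and tuple arity of a variable, since the extra $\sqcup$ clause only touches base types --- is indeed the right thing to check, and it goes through.
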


\medskip
\begin{lem}[Split] If $n>0$ then:
\begin{enumerate}
 \item  If $\vdash \St:\Pi_1\circ...\circ\Pi_n$,  then there exist $\St_1,...,\St_n$ such that    $\vdash \St_j:\Pi_j$, for all $j=1,...,n$. Moreover, $\St_1,...,\St_n$ are all subsequences of $\St$ that, taken in pairs, do not have variables with linear values in common.
 \item  If $\vdash \St:\Pi_1\sqcup...\sqcup\Pi_n$,  then there exist $\St_1,...\St_n$ such that    $\vdash \St_j:\Pi_j$, for all $j=1,...,n$. Moreover, $\St_1,...,\St_n$ are all subsequences of $\St$ that, taken in pairs, they do not have variables with linear values in common, except those of base type.
\end{enumerate}
\end{lem}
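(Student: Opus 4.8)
The plan is to prove both clauses by a single induction on the length of $\St$ (equivalently, on the derivation of $\vdash\St:\Pi$), reducing the $n$-ary case to the binary one through the associativity of $\circ$ and of $\sqcup$ and using that every $\circ$-split is a $\sqcup$-split in which the extra pseudosplit rule is never invoked. I would strengthen the induction hypothesis so that it pins down the witnessing sub-stores: $\St_j$ will be the restriction of $\St$ to the set of variables reachable from the domain of $\Pi_j$ along the ``occurs in the stored value'' relation. With this choice the ``subsequence'' part is immediate, and --- since a store built with fresh variables supplied by $new\ \St$ is acyclic, so that the last binding of $\St$ is never reachable from an earlier one --- in the inductive step that last binding lies in the set chosen for $\Pi_j$ exactly when its variable occurs in $\Pi_j$, which keeps the recursion clean.

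First I would dispose of the base case: rule $\mathsf{(sem)}$ gives $\St=[]$ and $\Pi=[]$, forcing every $\Pi_j=[]$, so $\St_j=[]$ works. For the inductive step, write $\St=(\St_0,\x=\va)$; the last rule is $\mathsf{(sva)}$ or $\mathsf{(shi)}$, and from the shape of $\Pi$ I would read off how the final binding $\x:\sT$ is distributed. A $\qu$-binding goes to all the $\Pi_j$; a linear binding of non-base pretype goes to exactly one; a $\qh\ \B$-binding goes to exactly one in a $\circ$-split, while in a $\sqcup$-split a $\ql\ \B$-binding of $\Pi$ arises from a $\ql\ \B$-occurrence in one $\Pi_k$ together with $\qh\ \B$-occurrences in the others. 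Stripping these $\x$-entries yields a split (respectively pseudosplit) of the context typing $\St_0$ --- after first peeling off, via the induction hypothesis applied to the premise $\vdash\St_0:\Gamma_1\circ\Gamma_2$ of $\mathsf{(sva)}$, the subcontext $\Gamma_1$ that types $\va$ --- so the induction hypothesis supplies the sub-stores for $\St_0$. I would then re-attach $\x=\va$ to each sub-store whose context contains $\x$, together with the fragment of $\St_0$ needed to type $\va$, invoking $\mathsf{(sva)}$ when the re-attached entry is $\x:\q\ \Pt$ and $\mathsf{(shi)}$ when it is $\x:\qh\ \B$; the latter step is legitimate precisely because the Well-typing lemma guarantees that such a $\va$ is a base value $\ql\ \bv$ with $(\bv:\B)\in\Sigma$. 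When $\va$ is unrestricted it is typed in an unrestricted context, which can be weakened into each receiving sub-store; when $\va$ is linear, its whole typing context is attached to the unique sub-store that receives it.

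The main obstacle will be verifying the ``moreover'' clauses: that the sub-stores share no variable holding a linear value, with the base-type exception in the pseudosplit case. Two points need care. The first is the free variables buried inside stored tuples and closures; these were consumed when the value was built and hence do not occur in $\Pi$ at all, so one must show their reachability sets are pairwise disjoint apart from unrestricted variables. This follows from the fact that in a well-typed store each linear variable is referenced by at most one later binding --- a consequence of $\mathsf{(sva)}$ removing the linear part of $\Gamma_1$ from the remaining context. The second, and the real crux, is the handling of $\qh\ \B$-bindings: by the Well-typing lemma they are stored linearly as $\ql\ \bv$, so in clause 1 a $\circ$-split must route each of them to a single sub-store, whereas in clause 2 the pseudosplit deliberately duplicates a base-type linear binding --- linear in one sub-store and hidden in the others --- and this duplicated, never-deallocated base value is exactly the exception the statement permits. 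Matching this routing to the invariant across all cases generated by $\mathsf{(sva)}$ and $\mathsf{(shi)}$ is where the bulk of the work lies.
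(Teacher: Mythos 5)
Your argument is sound and, for clause~1, it is essentially the paper's proof: structural induction on the derivation of $\vdash\St:\Pi$, distributing the last binding $\x=\va$ according to how $\x$ is routed by the split, the crux being that when $\va$ is linear the context $\Delta_1$ that types $\va$ can be absorbed into the single factor receiving $\x$, because the linear variables of $\Delta_1$ cannot occur in the other factors. The paper packages this as one application of the induction hypothesis by re-associating $\Delta_1\circ\Delta_2$ as $\Pi_1\circ\dots\circ(\Delta_1\circ\Pi^0_j)\circ\dots\circ\Pi_n$; your two-stage version (first peel off $\Gamma_1$, then split the remainder) needs an extra merging step to reconstitute a single sub-store typed in $\Delta_1\circ\Pi^0_j$, so the re-association is the cleaner bookkeeping, though both work under your length measure. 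The genuine divergence is clause~2: you push the pseudosplit through the same induction, treating $\mathsf{(shi)}$ and the duplicated $\ql\ \B$ bindings case by case, whereas the paper reduces clause~2 to clause~1 by erasing every $\qh\ \B$ occurrence from the $\Pi_i$ (and the corresponding entries of $\St$), applying clause~1, and then restoring the erased entries via $\mathsf{(shi)}$ --- the point at which the Well-typing lemma you invoke is needed. Your direct treatment is longer but makes the base-type exception in the invariant explicit; the paper's erasure is shorter but leaves the restoration implicit. Two details to correct: a $\qh\ \B$ binding is \emph{duplicated} by $\circ$, not routed to one factor (it belongs to $\Pi^\qu$, since $\qu(\qh\ \B)$ holds and $\Pi=\Pi\circ\Pi^\qu$); and in the pseudosplit rule only the factors \emph{preceding} the one that receives $\x:\ql\ \B$ acquire $\x:\qh\ \B$, the later factors omitting $\x$ entirely. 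Neither slip breaks the proof, since your re-attachment recipe is phrased in terms of which contexts contain $\x$; and the reachability-based strengthening of the induction hypothesis, while sound, is not needed --- the recursive construction already yields the required subsequences.
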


\begin{proof} (1) The proof is by structural induction on  $\vdash \St:\Pi$, where $\Pi=\Pi_1\circ...\circ\Pi_n$. The delicate part of the argument is the case $\St=(\St_0,\x=\va)$, $\Pi=\Delta_2,\x:\ql\ \Pt$, obtained from:

 \smallskip

$
\begin{array}{ll}
\underline{\vdash\St_0:\Delta_1\circ\Delta_2\quad \Delta_1\vdash \va:\ql\ \Pt}\\
\vdash \St_0,\x=\va:\Delta_2,\x:\ql\ \Pt
\end{array}
$.

\smallskip

\noindent Let $j$ such that $\Pi_j=\Pi^0_j,\x:\ql\ \Pt$. Then $\Delta_2=\Pi_1\circ...\circ\Pi^0_j\circ ... \circ\Pi_n$, and hence there exists $\Pi'_j$ such that $\Pi'_j=\Delta_1\circ\Pi^0_j$ and $\Delta_1\circ\Delta_2=\Pi_1\circ...\circ\Pi'_j\circ ... \circ\Pi_n$, because the variables in $\Delta_1$ with linear types cannot occur within $\Pi_i$, for $i\neq j$. Let $\St_1,...,\St_n$ be the stores given by the induction hypothesis. Take $\St_1,...,(\St_j,\x=\va),...,\St_n$. That $\vdash \St_j,\x=\va : \Pi_j$ follows from $\Pi_j=\Pi_j^0,\x:\ql\ \Pt$ and: 

\smallskip 

\reg{\vdash \St_j : \Delta_1\circ\Pi_j^0\quad \Delta_1\vdash \va:\ql\ \Pt}{\vdash \St_j,\x=\va :\Pi_j^0,\x:\ql\ \Pt}

\smallskip

\noindent (2) Follows from (1): if $\vdash \St:\Pi_1\sqcup...\sqcup\Pi_n$ then $\vdash \St^\q:\Pi_1^\q\circ ...\circ\Pi^\q_n$, where $\Pi_i^\q$ (resp. $\St^\q$) is obtained from $\Pi_i$ by eliminating variables of type $\qh\ \B$.
\end{proof}

\begin{lem}[Store]
If  $\vdash \St:\Pi,\x:\ql\ \Pt,\Pi_1$, then there exists a context $\Pi^*$  complementary to $\Pi$ such that $\vdash \St\sim_\ql\x:\Pi^*\circ\Pi,\Pi_1$.
\end{lem}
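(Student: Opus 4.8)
The plan is to argue by induction on the length of $\St$ (equivalently, on the derivation of $\vdash\St:\Pi,\x:\ql\ \Pt,\Pi_1$), relying on two easy invariants of the store-typing rules, each obtained by a trivial induction: every variable occurring in the typing context of a store is bound in that store, and the last binding of $\St$ matches the last variable of its typing context. I would first dispose of the case in which the last binding of $\St$ is $\x=\va$; by the second invariant this forces $\Pi_1=[]$, and only rule $\mathsf{(sva)}$ can have produced the typing, so $\St=\St_0,\x=\va$ with $\vdash\St_0:\Gamma_1\circ\Pi$ and $\Gamma_1\vdash\va:\ql\ \Pt$ for some $\Gamma_1$. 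Since $\St\sim_\ql\x=\St_0$ and $\Gamma_1\circ\Pi$ is defined, $\Pi^*:=\Gamma_1$ is complementary to $\Pi$ and witnesses $\vdash\St\sim_\ql\x:\Pi^*\circ\Pi,\Pi_1$.

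In the inductive step the last binding of $\St$ is $\y=\va'$ with $\y\neq\x$, so $\Pi_1=\Pi_1^{-},\y:\sT$, and by the first invariant $\x$ is bound in $\St_0$, whence $\St\sim_\ql\x=(\St_0\sim_\ql\x),\y=\va'$. If the last rule is $\mathsf{(shi)}$ the step is immediate: from $\vdash\St_0:\Pi,\x:\ql\ \Pt,\Pi_1^{-}$ the induction hypothesis gives $\Pi^*$ complementary to $\Pi$ with $\vdash\St_0\sim_\ql\x:\Pi^*\circ\Pi,\Pi_1^{-}$, and re-applying $\mathsf{(shi)}$ — which consumes no resources — re-attaches $\y=\va'$ and yields $\vdash\St\sim_\ql\x:\Pi^*\circ\Pi,\Pi_1$. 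If the last rule is $\mathsf{(sva)}$ then $\vdash\St_0:\Gamma_1\circ\Gamma_2$ with $\Gamma_1\vdash\va':\sT$ and $\Gamma_2=\Pi,\x:\ql\ \Pt,\Pi_1^{-}$; since $\x$ carries a linear type, the split rules force $\x\notin\Gamma_1$, and $\Gamma_1$ splits along the position of $\x$ as $\Gamma_1=\Gamma_1^{a},\Gamma_1^{b}$ with $\Gamma_1\circ\Gamma_2=(\Gamma_1^{a}\circ\Pi),\x:\ql\ \Pt,(\Gamma_1^{b}\circ\Pi_1^{-})$. Applying the induction hypothesis to $\St_0$ then produces $\widehat{\Pi}$ complementary to $\Gamma_1^{a}\circ\Pi$ with $\vdash\St_0\sim_\ql\x:\widehat{\Pi}\circ\Gamma_1^{a}\circ\Pi,\ \Gamma_1^{b}\circ\Pi_1^{-}$; reading this last context as $\Gamma_1$ together with the complementary context $\widehat{\Pi}\circ\Pi,\Pi_1^{-}$ (using associativity and commutativity of $\circ$, together with the fact that definedness of $\widehat{\Pi}\circ(\Gamma_1^{a}\circ\Pi)$ entails definedness of $\widehat{\Pi}\circ\Pi$), a last use of $\mathsf{(sva)}$ fed with $\Gamma_1\vdash\va':\sT$ re-attaches $\y=\va'$ and gives $\vdash\St\sim_\ql\x:\widehat{\Pi}\circ\Pi,\Pi_1$, so $\Pi^*:=\widehat{\Pi}$ works.

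The only genuine obstacle I expect is the ordering bookkeeping in the $\mathsf{(sva)}$ case: one has to check that a split $\Gamma_1\circ(\Pi,\x:\ql\ \Pt,\Pi_1^{-})$ with $\x\notin\Gamma_1$ decomposes exactly as displayed above, and that $\circ$ is associative, commutative, and monotone with respect to sub-contexts. These are the routine structural properties of context split which, as in \cite{walker}, I would use without spelling them out; granting them, the reconstruction of the store-typing derivation through $\mathsf{(sva)}$ and $\mathsf{(shi)}$ is purely mechanical.
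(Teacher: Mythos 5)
Your proposal is correct and follows essentially the same route as the paper: induction on the store-typing derivation, with the base case being the point where $\x$ is the last binding (equivalently $\Pi_1$ empty, where the premises of the $\mathsf{(sva)}$ rule directly supply $\Pi^*$), and an inductive step that peels off the last binding, redistributes the split $\Gamma_1\circ(\Pi,\x:\ql\ \Pt,\Pi_1^{-})$ around the position of $\x$, applies the induction hypothesis, and re-attaches the binding — exactly the paper's manipulation of $\Delta_1\circ(\Pi,\x:\ql\ \Pt,\Pi_1)$ into $\overline{\Pi},\x:\ql\ \Pt,\overline{\Pi}_1$. The only (welcome) difference is that you also treat the $\mathsf{(shi)}$ case explicitly, which the paper leaves implicit, and you lean on the same unproved structural properties of $\circ$ that the paper does.
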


\begin{proof} The proof is by structural induction the derivation of $\vdash \St:\Pi,\x:\ql\ \Pt,\Pi_1$. We take $\Pi_1=\emptyset$ as the base case, which follows since $\vdash\St\sim_\ql\x:\Pi^*\circ\Pi$ and $\Pi^*\vdash \va:\ql\ \Pt$ are the premises of the derivation of $\vdash \St:\Pi,\x:\ql\ \Pt$.
For the inductive step, consider the derivation:

 \smallskip

$
\begin{array}{ll}
\underline{\vdash\St,\x=\va,\St_1:\Delta_1\circ(\Pi,\x:\ql\ \Pt,\Pi_1)\quad \Delta_1\vdash \va':\T}\\
\vdash \St,\x=\va,\St_1,\x'=\va':\Pi,\x:\ql\ \Pt,\Pi_1,\x':\T
\end{array}
$.

\smallskip

\noindent Let $\overline{\Pi},\overline{\Pi}_1$ such that $\overline{\Pi},\x:\ql\ \Pt,\overline{\Pi}_1=\Delta_1\circ(\Pi,\x:\ql\ \Pt,\Pi_1)$ and $\vdash\St,\x=\va,\St_1:\overline{\Pi},\x:\ql\ \Pt,\overline{\Pi}_1$. By the induction hypothesis, there exists $\Pi^*$ complementary to  $\overline{\Pi}$ such that $\vdash\St,\St_1:\Pi^*\circ\overline{\Pi},\overline{\Pi}_1$. Thus $\Pi^*$ and $\Delta_1$ do not have linear variables in common. Then $\Pi^*\circ\overline{\Pi},\overline{\Pi}_1 = \Delta_1\circ(\Pi^*\circ\Pi,\Pi_1)$ and $\vdash\St,\St_1,\x'=\va':\Pi^*\circ\Pi,\Pi_1,\x':\T$ follows.
\end{proof}

\begin{cor}[Store]
If  $\vdash \St:\Pi,\x:\ql\ \B,\Pi_1$  then $\vdash \St\sim_\ql\x:\Pi,\Pi_1.$ \qed\end{cor}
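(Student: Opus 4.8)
The plan is to obtain the corollary as the instance $\Pt=\B$ of Lemma (Store) and then to show that the complementary context produced there collapses to the unrestricted part of $\Pi$. Concretely, applying Lemma (Store) with $\Pt:=\B$ gives a context $\Pi^*$ complementary to $\Pi$ with $\vdash\St\sim_\ql\x:\Pi^*\circ\Pi,\Pi_1$, so it remains only to prove $\Pi^*\circ\Pi=\Pi$. By the uniqueness remark just before the Well-typing lemma, $\Pi^\qu$ is the unique context satisfying $\Pi^\qu\circ\Pi=\Pi$, so it suffices to establish $\Pi^*=\Pi^\qu$.

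The first ingredient is that the value deleted from the store is a base-type constant: since $\Pi\x=\ql\ \B$, item~1 of the Well-typing lemma gives $\St\x=\ql\ \bv$ with $(\bv:\B)\in\Sigma$. Inspecting the proof of Lemma (Store), the context $\Pi^*$ it constructs is exactly the one used to type this value — in the base case $\Pi^*\vdash\St\x:\ql\ \B$ is literally a premise, and the inductive steps carry the same $\Pi^*$ through unchanged — so I would record that the $\Pi^*$ returned by Lemma (Store) also satisfies $\Pi^*\vdash\St\x:\ql\ \B$. Now the only rule deriving $\Pi^*\vdash\ql\ \bv:\ql\ \B$ is the operator rule instantiated at the nullary symbol $\bv$, whose left-hand side is the $(0{+}1)$-ary context split, which by definition is the assertion $\qu(\Pi^*)$. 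Hence $\qu(\Pi^*)$.

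It remains to upgrade $\qu(\Pi^*)$ to $\Pi^*=\Pi^\qu$, and here I would use the shape of the split rules of Figure~3 together with the fact that $\Pi^*\circ\Pi$ is defined: no split rule accounts for a variable of non-linear type occurring in only one of the two argument contexts. Consequently every variable of $\Pi^*$ (all non-linear, since $\qu(\Pi^*)$) must also occur in $\Pi$, every non-linear variable of $\Pi$ must occur in $\Pi^*$, and on the overlap the types agree; that is precisely $\Pi^*=\Pi^\qu$. Therefore $\Pi^*\circ\Pi=\Pi^\qu\circ\Pi=\Pi$, and $\vdash\St\sim_\ql\x:\Pi,\Pi_1$ follows.

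The step I expect to cost the most care is certifying $\qu(\Pi^*)$: it forces one either to strengthen the conclusion of Lemma (Store) so that the complementary context it returns is known to type the value removed from the store, or to re-run a base-type-specialised copy of that lemma's induction. A minor but genuine subtlety — already absorbed by routing the argument through Lemma (Store) — is that $\x$ need not be the last binding of $\St$, so one cannot simply invert a single use of rule $\mathsf{(sva)}$ but must appeal to the $\Pi_1$-indexed induction.
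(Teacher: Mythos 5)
Your proof is correct and follows the route the paper intends: the corollary is stated with no written proof, as an immediate consequence of Lemma (Store), and your argument --- specialize that lemma to $\Pt=\B$, observe via the Well-typing lemma that the removed value is a constant $\ql\ \bv$ whose typing forces $\qu(\Pi^*)$ through the $(0{+}1)$-ary split, and conclude $\Pi^*\circ\Pi=\Pi^\qu\circ\Pi=\Pi$ --- is exactly the missing justification. Your closing remark is also apt: one must either inspect the proof of Lemma (Store) to see that the returned $\Pi^*$ types $\St\x$, or strengthen its statement accordingly, since the lemma as stated does not record that fact.
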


\noindent Lemma Context follows easily by structural induction. It will be key to use Lemma Split.

\begin{lem}[Context] If  $\vdash\St:\Pi$ and $\Pi\vdash\e[\e_0]:\T$, then there exist $\St_0,\Pi_0,\T_0$ such that  $\vdash \St_0:\Pi_0$, $\St_0$ is a subsequence of $\St$, and  $\Pi_0\vdash\e_0:\T_0$. \qed \end{lem}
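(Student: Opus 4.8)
The plan is to prove this by structural induction on the evaluation context $\e[]$ (the grammar of Figure~7), holding $\e_0$ fixed and quantifying over $\St$, $\Pi$ and $\T$. The base case $\e[]=[]$ is immediate: there $\e[\e_0]=\e_0$, so $\Pi\vdash\e_0:\T$ already holds, and one takes $\St_0=\St$, $\Pi_0=\Pi$, $\T_0=\T$, using that $\St$ is a subsequence of itself.

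For the inductive step I would note that if $\e[]\neq[]$ then $\e[\e_0]$ has one of the outermost forms listed in Figure~7, and, since the type system of Figure~4 is syntax-directed (each expression form is governed by exactly one rule, the only latitude being in how the context is split), the derivation of $\Pi\vdash\e[\e_0]:\T$ must end with the matching rule. Each such rule has a single premise governing the subexpression $\e'[\e_0]$ that carries the hole, where $\e'[]$ is the strictly smaller evaluation context nested inside $\e[]$, and that premise assigns $\e'[\e_0]$ a type in a context $\Pi'$ which is one of the components of $\Pi$ under the split $\circ$ (in the rule for $\op^\tau$) or under the pseudosplit $\sqcup$ (in the rules for tuples, application, $\If$, $\Spl$ and $\Let$). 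At this point I would invoke the Split Lemma---part~(1) for $\circ$, part~(2) for $\sqcup$---to extract a subsequence $\St'$ of $\St$ with $\vdash\St':\Pi'$, and apply the induction hypothesis to $\vdash\St':\Pi'$ together with $\Pi'\vdash\e'[\e_0]:\T'$; transitivity of ``being a subsequence'' finishes the case, since the store delivered by the induction hypothesis is a subsequence of $\St'$, hence of $\St$.

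The one place where this is not entirely mechanical, and the step I expect to cost the most care, is the rule for $\op^\tau$: there the premise for the hole-carrying argument is a \emph{pseudotyping} $\Pi'\Vdash\e'[\e_0]:\sq_i\ \B_i$ rather than an ordinary typing. I would split this into two sub-cases. If $\e'[\e_0]$ is not a bare variable---which is automatic whenever $\e'[]\neq[]$---then the extra rule of the pseudotyping relation, being applicable only to variables, cannot have been the last rule used, so $\Pi'\vdash\e'[\e_0]:\sq_i\ \B_i$ is in fact an ordinary derivation and the argument proceeds exactly as above. The genuinely new sub-case is $\e'[]=[]$ with $\e_0$ a variable $\y$ whose pseudotype $\qh\ \B_i$ was obtained from that extra rule, so that $\Pi'=\Gamma_1,\y:\qh\ \B_i,\Gamma_2$ with $\qu(\Gamma_1,\Gamma_2)$; here I would use the Well-typing Lemma on $\vdash\St':\Pi'$ to learn that $\St'\y=\ql\ \bv$ with $(\bv:\B_i)\in\Sigma$, and then exhibit the one-entry store $\St_0=(\y=\ql\ \bv)$, the context $\Pi_0=(\y:\ql\ \B_i)$ and $\T_0=\ql\ \B_i$: this $\St_0$ is a subsequence of $\St$, one checks $\vdash\St_0:\Pi_0$ directly from the store-typing rules (typing the value $\ql\ \bv$ via the operator rule for the nullary symbol $\bv$), and $\Pi_0\vdash\y:\ql\ \B_i$ holds. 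In the intended use of this lemma inside the preservation proof $\e_0$ is always a $\beta$-node, so this last sub-case never actually arises; it must be dealt with only because the statement quantifies over arbitrary $\e_0$.
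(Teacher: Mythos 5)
Your proof is correct and follows exactly the route the paper intends: the paper gives no written proof, only the remark that the lemma ``follows easily by structural induction'' with the Split Lemma as the key ingredient, which is precisely your induction on $\e[]$ using Split (1) for $\circ$ and Split (2) for $\sqcup$. Your careful treatment of the $\op^\tau$ case --- observing that the extra pseudotyping rule $\Pi\Vdash\x:\qh\ \B$ can only fire when the hole-carrying argument is a bare variable, and discharging that sub-case via the Well-typing Lemma --- fills in a detail the paper silently elides, and is a welcome addition rather than a deviation.
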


\begin{lem}[Sustitution] If
\begin{enumerate}
 \item $\Delta_1\vdash \ql\ \La\x_1:\T_1,...,\x_n:\T_n\Ra$
 \item $\Delta_2,\z_1:\T_1,...,\z_n:\T_n\vdash \e:\T$
\end{enumerate}

\noindent then $\Delta_1\circ \Delta_2\vdash [\z_1\mapsto\x_1,...,\z_n\mapsto\x_n]\e:\T$.
\end{lem}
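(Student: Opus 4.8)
The plan is to argue by structural induction on the derivation of hypothesis~(2), i.e.\ of $\Delta_2,\z_1\!:\!\T_1,\dots,\z_n\!:\!\T_n\vdash\e:\T$, using hypothesis~(1) only to control the shape of $\Delta_1$. Since the tuple rule is the unique rule whose subject is a tuple $\q\ \La\cdots\Ra$, inversion on~(1) gives a pseudosplit $\Delta_1=\Gamma_1\sqcup\cdots\sqcup\Gamma_n$ with $\Gamma_i\vdash\x_i\!:\!\T_i$ for each $i$; moreover each $\Gamma_i$ is $\x_i\!:\!\T_i$ flanked by variables of unrestricted or hidden pseudotype, so $\Gamma_i$ is unrestricted whenever $\qu(\T_i)$, while $\Delta_1$, being the output of a well-formed pseudosplit, contains no hidden variable. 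I use throughout the usual variable conventions --- the $\x_i$ are fresh for the bound pattern variables $\z_j$, so the renaming $[\z_1\!\mapsto\!\x_1,\dots,\z_n\!\mapsto\!\x_n]$ is capture-free --- together with two routine facts: a derivable judgement may be weakened by variables of unrestricted pseudotype, and, since $\qu(\qh\ \B)=\True$, variables of hidden pseudotype may be added to or removed from a context as freely as unrestricted ones.

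For the base case $\e$ is a variable, typed by the variable rule or, inside an operator argument, by the pseudotyping rule. If $\e=\z_i$ then $\T=\T_i$ and the applicable rule forces $\Delta_2$ and every $\z_j$ ($j\neq i$) to be unrestricted; hence $\qu(\T_j)$ and $\Gamma_j$ is unrestricted for $j\neq i$, so $\Delta_1$ agrees with $\Gamma_i$ up to unrestricted material, and a derivation of $\Delta_1\circ\Delta_2\vdash\x_i\!:\!\T_i$ --- which incidentally exhibits $\Delta_1\circ\Delta_2$ as defined --- is obtained from $\Gamma_i\vdash\x_i\!:\!\T_i$ by re-introducing that material as unrestricted flanking context. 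If $\e=\y\notin\{\z_1,\dots,\z_n\}$ the rule forces every $\z_j$, hence every $\T_j$, hence every $\Gamma_j$, hence $\Delta_1$, to be unrestricted, and $\y$ is unchanged by the substitution, so $\Delta_1$ folds into the flanking context of the rule for $\y$; the pseudotyping subcase is identical.

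For the inductive step, single-premise rules are handled by a direct appeal to the induction hypothesis after adjoining the bound variable --- for $\q\ \lambda\x\!:\!\T'.\e'$ one must keep the side condition $\q(\Pi)$, but if $\q=\qu$ is demanded then $\qu(\Delta_2,\z_1\!:\!\T_1,\dots)$ already forces every $\Gamma_i$, hence $\Delta_1\circ\Delta_2$, unrestricted. For a rule combining its premises through a split $\square\in\{\circ,\sqcup\}$ (the operator rule using $\circ$, the others $\sqcup$), I decompose the context $\Delta_2,\z_1\!:\!\T_1,\dots,\z_n\!:\!\T_n$ along $\square$; each $\z_i$ goes to one premise, or --- when $\square=\sqcup$ and $\T_i$ is a linear base type --- to several, hidden on all but one. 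I then \emph{route} $\Gamma_i$ to the premise(s) carrying $\z_i$, using the freedom to move hidden variables so that no stray hidden occurrence of an $\x_i$ is stranded on a branch that does not also receive $\Gamma_i$; on each branch the routed $\Gamma$'s are $\sqcup$-combined into a context that re-types the corresponding sub-tuple, so hypothesis~(1) applies there, the induction hypothesis (extended to $\Vdash$, whose only new case is a hidden variable, never among the $\z_i$) is applied to each premise, and the resulting judgements are recombined through $\square$ --- using commutativity and associativity of the splits --- into a derivation over $\Delta_1\circ\Delta_2$.

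I expect this last recombination to be the main obstacle, its source being the coexistence of the two splits: $\Delta_1$ is decomposed by $\sqcup$, the operator rule combines by $\circ$, and the conclusion is phrased with $\circ$, so one must verify that after routing the $\Gamma_i$'s their reassembly still obeys whichever split the rule uses. The delicate point is a linear base-type $\x_i$ whose companion hidden occurrences may be spread over several $\Gamma_j$ and hence over several branches: these must remain matched by the pseudosplit rule on each branch and collapse back to the single $\ql\ \B$ occurrence demanded in $\Delta_1\circ\Delta_2$, which is exactly where the controlled movement of hidden variables is needed. In effect one isolates, and folds into the induction, a structural compatibility lemma asserting that the decomposition $\Delta_1=\Gamma_1\sqcup\cdots\sqcup\Gamma_n$ is coherent with an arbitrary refinement of $\Delta_1\circ\Delta_2$ by $\circ$ and $\sqcup$; granting that, the remaining rule-by-rule checks are routine.
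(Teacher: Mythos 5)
Your skeleton (induction on the derivation of hypothesis~(2), a variable base case, and a ``route the pieces of $\Delta_1$ to the premises, then recombine'' inductive step) matches the paper's, and you correctly locate the difficulty in the recombination across the two splits. But that recombination is precisely the content of the lemma, and you do not carry it out: you defer it to an unstated ``structural compatibility lemma'' asserting that the inverted decomposition $\Delta_1=\Gamma_1\sqcup\cdots\sqcup\Gamma_n$ is coherent with arbitrary refinements of $\Delta_1\circ\Delta_2$, and you explicitly flag the case of a linear base-type $\x_i$ whose hidden companion occurrences are scattered over several $\Gamma_j$ as unresolved. As it stands this is a genuine gap: the hard step is named but not proved, and the compatibility statement you would need is at least as delicate as the lemma itself.

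The paper's proof shows that the obstacle you run into is an artifact of insisting on \emph{inverting} hypothesis~(1) and routing the resulting $\Gamma_i$'s. Instead, it never decomposes $\Delta_1$: for each premise of a binding rule (the worked case is $\Spl\ \e\ \As\ \La\w_1,\dots,\w_k\Ra\ \In\ \e'$, with $\Delta_2=\Delta_{21}\circ\Delta_{22}$ and the $\z_i$'s partitioned accordingly) it \emph{constructs from scratch} a fresh instance of hypothesis~(1) over the context $\Delta^{\qu}_{2j},(\x_{i_j}\!:\!\T_{i_j})_{i_j}$ --- for the body premise additionally adjoining the freshly renamed bound variables $\w'_1\!:\!\T^\w_1,\dots,\w'_k\!:\!\T^\w_k$ to both the context and the tuple --- which is immediate from the variable and tuple rules and contains no hidden entries at all, so your scattered-$\qh$ problem never arises. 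Two concrete observations then close the case: since the $\T_i$ are types and not pseudotypes, $\circ$ and $\sqcup$ agree on the $\z_i\!:\!\T_i$ portion of the split of $\Delta_2,\vec\z\!:\!\vec\T$; and well-definedness of $\Delta_1\circ\Delta_2$ forces each $\x_i$ to be either absent from $\Delta_2$ or unrestricted with matching type, which yields the single context identity $\Delta_1\circ\Delta_2=(\Delta^\x_1\circ\Delta_{21})\circ(\Delta^\x_2\circ\Delta_{22})$ needed to reassemble the two inductive conclusions. If you replace your deferred compatibility lemma with this direct construction, your argument goes through.
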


\begin{proof} Note that since $\Delta_1\circ \Delta_2$ must be well defined, then for every $i=1,...,n$ we have that either $\x_i\notin dom\ \!\Delta_2$ or $\T_i$ is unrestricted and  $\Delta_2\ \!\x_i=\T_i$. The proof is by structural induction on $\e$.  The delicate  part of the argument deals with the binding constructors. Consider the case $\Spl\ \e\ \As\ \La\w_1,...,\w_k\Ra\ \In$ $\e'$.  Define the contexts $\Delta_{21},\Delta_{22}$ and the sequences $\z_{i_1},\z_{i_2}$ (subsequences of $\z_1,...,\z_n$)  such that (a) $\Delta_2= \Delta_{21}\circ\Delta_{22}$,

\smallskip
(b) $\z_1:\T_1,...,\z_n:\T_n= (\z_{i_1}:\T_{i_1})_{i_1}\circ(\z_{i_2}:\T_{i_2})_{i_2}$
\smallskip

\noindent and:

\smallskip
\noindent$
\begin{array}{c}
\underline{ \Delta_{21},(\z_{i_1}:\T_{i_1})_{i_1}\vdash \e:\q\ \La\T^\w_1,...,\T^\w_k\Ra
\quad\Delta_{22},(\z_{i_2}:\T_{i_2})_{i_2},\w_1:\T^\w_1,...,\w_k:\T^\w_k\vdash \e':\T}\\
\Delta_2,\z_1:\T_1,...,\z_n:\T_n\vdash
\quad\Spl\  \e\ \As\ \La\w_1,...,\w_k\Ra\ \In\ \e':\T\\
\end{array}
$

\smallskip

\noindent  The last rule is correct because since $\T_i$ is a type (not a pseudotype), for all $i=1,...,n$, then in (a) and (b) one can replace $\circ$ by $\sqcup$. Let $\w'_j$ be a rename of $\w_j$ to avoid capture, for $j=1,..,k$. 
Define $\Delta^\x_1=\Delta^\qu_{21},(\x_{i_1}\!:\!\T_{i_1})_{i_1}$ and $\Delta^\x_2=\Delta^\qu_{22},(\x_{i_2}\!:\!\T_{i_2})_{i_2}$. 
It is easy to check that:

\medskip

$\Delta^\x_2,\w'_1\!:\!\T^\w_1,...,\w'_k\!:\!\T^\w_k\vdash\ql\ \La...,\x_{i_2},...,\w'_1,...,\w'_k\Ra:\ql\ \La...,\T_{i_2},...,\T^\w_1,...,\T^\w_k\Ra$

\smallskip

\noindent  By the induction hypothesis, we have that:
\smallskip

(A) $\Delta^\x_1\circ\Delta_{21}\vdash [\z_{i_1}\mapsto\x_{i_1}]\e:\q\ \La\T^\w_1,...,\T^\w_k\Ra$

\smallskip

(B) $(\Delta^\x_2,\w'_1\!:\!\T^\w_1,...,\w'_k\!:\!\T^\w_k)\circ\Delta_{22}\vdash [\z_{i_2}\mapsto\x_{i_2},\w_j\mapsto\w'_j]\e':\T$

\smallskip

\noindent  Since $\w'_j$'s are fresh variables, we have that 
\smallskip

$(\Delta^\x_2,\w'_1\!:\!\T^\w_1,...,\w'_k\!:\!\T^\w_k)\circ\Delta_{22}=(\Delta^\x_2\circ\Delta_{22}),\w'_1\!:\!\T^\w_1,...,\w'_k\!:\!\T^\w_k$. 

\smallskip

\noindent It is easy to check that $\Delta_1\circ\Delta_{2}=(\Delta^\x_1\circ\Delta_{21})\circ(\Delta^\x_2\circ\Delta_{22})$. On the other hand we have:

\smallskip

$[\z_j\mapsto\x_j](\Spl\  \e\ \As\ \La\w_1,...,\w_n\Ra\ \In\ \e')\Spl\  [\z_j\mapsto\x_j]\e\ \As\ \La\w'_1,...,\w'_n\Ra\ \In\ [\z_j\mapsto\x_j,\w_j\mapsto\w'_j]\e'$.

\smallskip

\noindent Finally, the case follows from the rule for split: 

\smallskip

\regl{\Delta^\x_1\circ\Delta_{21}\vdash [\z_j\mapsto\x_j]\e:\q\ \La \T^\w_1,...,\T^\w_k\Ra\smallskip} 
{(\Delta^\x_2\circ\Delta_{22}),\w'_1\!:\!\T^\w_1,...,\w'_k\!:\!\T^\w_k\vdash [\z_j\mapsto\x_j,\w_j\mapsto\w'_j]\e'\smallskip}{\Delta_{1}\circ\Delta_{2}\vdash[\z_j\mapsto\x_j](\Spl\  \e\ \As\ \La\w_1,...,\w_k\Ra\ \In\ \e'):\T\qquad\ }
\end{proof}

\begin{thm}[Preservation] If $\vdash(\St,\e)$ and $(\St,\e)\rightarrow(\St',\e')$,  then $\vdash(\St',\e')$.
\end{thm}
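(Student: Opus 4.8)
The plan is to unfold the reduction through the context rule: $(\St,\e)\rightarrow(\St',\e')$ holds exactly because $\e=\e[\e_0]$ for some evaluation context $\e[]$ and some $\beta$-node $\e_0$, $(\St,\e_0)\rightarrow_\beta(\St',\e_0')$, and $\e'=\e[\e_0']$. I would then prove, by induction on the structure of $\e[]$, the slightly stronger statement: if $\vdash\St:\Pi$, $\Pi\vdash\e[\e_0]:\T$ and $(\St,\e_0)\rightarrow_\beta(\St',\e_0')$, then there is $\Pi'$ with $\vdash\St':\Pi'$ and $\Pi'\vdash\e[\e_0']:\T$. Keeping the outer type $\T$ is what makes the inductive re-assembly go through; the theorem then follows since $\vdash(\St,\e)$ supplies, by its defining rule, some $\Pi,\T$ with $\vdash\St:\Pi$ and $\Pi\vdash\e:\T$. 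Throughout I may normalise the variable introduced by any $\beta$-rule to be $new\ \St$, so that it is fresh with respect to the whole store.

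For the inductive step $\e[]$ is one of the compound contexts of Figure~7, and the hole always sits in a non-binding, first-evaluated position. I invert the matching rule of Figure~4; since the hole is not a bare variable, the pseudotyping $\Vdash$ for that argument collapses to $\vdash$, so the relevant premise is an ordinary judgement $\Pi_k\vdash\e''[\e_0]:\T_k$, where $\Pi$ is the split (or pseudosplit) $\Pi_1\circ\cdots\circ\Pi_n$, resp.\ $\Pi_1\sqcup\cdots\sqcup\Pi_n$, with $\Pi_k$ the component carrying the hole. By the Split Lemma there is a subsequence $\St_k$ of $\St$ with $\vdash\St_k:\Pi_k$. Every free variable of $\e_0$ is free in $\e''[\e_0]$, hence in the domain of $\Pi_k$; each $\beta$-rule only reads such variables, possibly deallocates some of them, and possibly appends a single fresh variable, so the same step is available from $\St_k$, and $\St'$ is obtained from $\St$ by exactly the same deletions and append (with $\St_k'$ a subsequence of $\St'$). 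The induction hypothesis yields $\Pi_k'$ with $\vdash\St_k':\Pi_k'$ and $\Pi_k'\vdash\e''[\e_0']:\T_k$, and re-inserting $\Pi_k',\St_k'$ in place of $\Pi_k,\St_k$ rebuilds $\vdash\St':\Pi'$ and $\Pi'\vdash\e[\e_0']:\T$; here the ``Moreover'' clauses of the Split Lemma guarantee the decomposition was tight enough to put back together, the recombined split being well defined because the appended variable is globally fresh, and because any linear base variable left in the store by an \textsf{(eop)}-redex (see below) is retyped there with pseudotype $\qh\ \B$, which is unrestricted for the split.

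The base case is $\e[]=[]$, where $\e_0$ is itself a $\beta$-node, and I treat each rule of Figure~8. For \textsf{(eva)} I extend the store by \textsf{(sva)} using the split $\Pi=\Pi\circ\Pi^\qu$, and type the result with the variable rule, legitimate since $\qu(\Pi^\qu)$. For \textsf{(eif)} I invert the $\If$ rule, use the Well-typing Lemma to get $\St\x=\q\ \Bool$, delete $\x$ by the Store Corollary, and keep the branch already typed by the relevant premise. For \textsf{(ele)} the store is unchanged and the conclusion is a variable-for-variable substitution (the single-variable analogue of the Substitution Lemma, proved the same way, merging the two occurrences of the bound variable and of $\y$). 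For \textsf{(esp)} I invert the $\Spl$ rule; Well-typing gives $\St\x=\q\ \La\x_1,\dots,\x_k\Ra$, and inverting the store derivation at the point where $\x$ was stored recovers, through the tuple rule, a context typing $\ql\ \La\x_1,\dots,\x_k\Ra:\ql\ \La\T_1,\dots,\T_k\Ra$; the Store Lemma removes $\x$ and exposes exactly that context, and the Substitution Lemma --- in precisely the shape in which it is stated --- types $[\p\mapsto\p']\e$. For \textsf{(efu)} I argue analogously: inverting the application rule and the $\lambda$ rule together with Well-typing identifies $\St\f$ with a closure whose body is typed in a context that, after removing $\f$ by the Store Lemma, is available again, and a single-variable instance of substitution types $[\x\mapsto\x']\e$.

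The main obstacle is the operator rule \textsf{(eop)}. Inverting the $\op^\tau$ rule gives $\Pi=\Pi_1\circ\cdots\circ\Pi_n$ with $\Pi_i\Vdash\x_i:\sq_i\ \B_i$ and result type $\q\ \B$; by the Well-typing Lemma each $\x_i$ holds a base value in $\St$, and each $\x_i$ used at a hidden position ($\sq_i=\qh$, clause (2) of Well-typing) holds a \emph{linear} value $\ql\ \w_i$ with $(\ql\ \w_i:\ql\ \B_i)\in\Sigma^\q$. The step deallocates exactly the $\x_i$ with $\sq_i=\ql$ --- the qualifier being read from $\tau$, not from the store --- and appends the fresh result $\x$ holding a base value of pretype $\B$. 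The delicate point is that an $\x_i$ used at a hidden position survives in $\St'$ still holding a linear value while no longer being mentioned by the program; such an entry is retyped in the store by the new rule \textsf{(shi)}, which applies precisely because $(\ql\ \w_i:\ql\ \B_i)\in\Sigma^\q$, giving it pseudotype $\qh\ \B$. Since $\qu(\qh\ \B)=\True$, these leftover entries obstruct neither the variable rule used to type the result nor any context split, so $\vdash\St':\Pi^*$ and $\Pi^*\vdash\x:\q\ \B$, where $\Pi^*$ consists of $\Pi^\qu$, the surviving $\qh\ \B$ entries, and $\x:\q\ \B$. This is exactly the situation that forces reading the qualifiers from $\tau$ and introducing \textsf{(shi)}; it is what keeps the recombined split of the inductive step well defined; and it is where using $\circ$ (rather than $\sqcup$) in the $\op^\tau$ rule --- so that a later, genuinely linear use of $\x_i$ is still paid for --- does its work.
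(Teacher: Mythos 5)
Your proposal is correct and follows essentially the same route as the paper: the same strengthened induction on the evaluation context preserving the outer type $\T$, the same use of the Well-typing, Split, Store and Substitution lemmas in the base cases, and the same key observation that $\mathsf{(shi)}$ retypes surviving linear base entries as $\qh\ \B$ so that the pseudosplit can be reassembled after an $\mathsf{(eop)}$ step (you in fact sketch more of the base cases than the paper does). The only point where you are slightly less precise than the paper is the reassembly in the inductive step for an operator redex, where the freshly created $\ql\ \B$ result must be explicitly threaded as $\x:\qh\ \B$ through the split components preceding the hole (with their stores extended via $\mathsf{(shi)}$) --- this is forced by the pseudosplit rule itself, not merely by freshness of $\x$ --- but the mechanism you describe is exactly the right one.
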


\begin{proof} Let $\e=\e[\e_0],\ \e'=\e[\e'_0] $, and suppose that $\vdash \St:\Pi$ and $\Pi\vdash\e[\e_0]:\T$. The proof is by induction on the context $\e[]$. We will prove that there exist $\St',\Pi'$ such that $\vdash \St':\Pi'$ and $\Pi'\vdash\e[\e'_0]:\T$.

\textbf{Case} $\e[] = []$. We will prove the case $\e_0=\Spl\ \x\ \As\ \La\z_1,...,\z_n\Ra\ \In\ \e_1$. Let $\St\x=\q\ \La\x_1,...,\x_n\Ra$. We have that that:

$(\St, \Spl\ \x\ \As\ \La\z_1,...,\z_n\Ra\ \In\  \e_1) \rightarrow_\beta (\St\sim_\q\x, [\La\z_1,...,\z_n\Ra\mapsto\La\x_1,...,\x_n\Ra]\e_1)$

\noindent Let $\e'=[\La\z_1,...,\z_n\Ra\mapsto\La\x_1,...,\x_n\Ra]\e_1$ and $\St'=\St\sim_\q \x$. We know that $\Pi=\Pi_1\sqcup\Pi_2$ and:

\smallskip \reg{\Pi_1\vdash \x:\q\ \La\T_1,...,\T_n\Ra\quad \Pi_2,\z_1:\T_1,...,\z_n:\T_n\vdash \e_1:\T}
{\Pi_1\sqcup\Pi_2\vdash \Spl\ \x\ \As\ \La\z_1,...,\z_n\Ra\ \In\  \e_1:\T}

\smallskip

If $\q=\qu$, then $\qu(\T_i)$ for all $i$.
If we apply the Substitution Lemma by defining $\Delta_1=\Pi_\qu$ and $\Delta_2=\Pi$, then we obtain: $\Pi \vdash[\La\z_1,...,\z_n\Ra\mapsto\La\x_1,...,\x_n\Ra]\e_1:\T $. Taking $\Pi'=\Pi$, the case follows immediately.

If $\q=\ql$, let $\Pi = \Delta_0,\x:\ql\ \La\T_1,...,\T_n\Ra,\Delta$. By Lemma Store  there exists $\Delta_1$ complementary to $\Delta_0$ such that $\Delta_1\vdash \ql\ \La\x_1,...,\x_n\Ra : \ql\ \La\T_1,...,\T_n\Ra $ and $\vdash\St\sim_\ql \x:\Delta_0\circ\Delta_1,\Delta$. Then $\Pi_1 = \Delta^\qu_0,\x:\La\T_1,...,\T_n\Ra,\Delta^\qu$ and $\Pi_2 =\Delta_0,\Delta$. Since  $\Delta_1,\Delta^\qu\vdash \ql\ \La\x_1,...,\x_n\Ra : \ql\ \La\T_1,...,\T_n\Ra $, by Lemma Sustitution, $(\Delta_1,\Delta^\qu)\circ(\Delta_0,\Delta)\vdash [\La\z_1,...,\z_n\Ra\mapsto\La\x_1,...,\x_n\Ra]\e_1:\T$. The case follows taken  $\Pi' = \Delta_0\circ\Delta_1,\Delta=(\Delta_1,\Delta^\qu)\circ(\Delta_0,\Delta)$.

\smallskip

\textbf{Case} $\e[] = \q\ \La\x_1,...,\e_k[],...,\e_n\Ra$. If $\St=\St'$ then the case follows easily.

Suppose that $\St'=\St\sim_\ql \x$ and $\Pi = \Delta_0,\x:\ql\ \Pt,\Delta$, where $\Pt$ is not a base type (cases split and application). First we prove that there exists $\Pi'$ such that $\vdash\St':\Pi'$.
By Lemma Store there exists $\Delta_1$ complementary to $\Delta_0$ such that $\vdash\St':\Delta_0\circ\Delta_1,\Delta$. Then, take $\Pi'=\Delta_0\circ\Delta_1,\Delta$ to obtain $\vdash\St':\Pi'$.

Finally, we prove for the case $\St'=\St\sim_\ql \x$ that if  $\Pi\vdash\e[\e_0]:\T$ then $\Pi'\vdash\e[\e'_0]:\T$. Since $\Pi=\Pi_1\sqcup...\sqcup\Pi_n$, by   Lemma Split there exist $\St_1,...,\St_n$ such that  $\vdash \St_j:\Pi_j$, for all $j=1,...,n$. By the  induction hypothesis  there exist  $\Pi'_k,\St'_k$ such that $\Pi'_k\vdash \e_k[\e'_0]:\T_k$ and $\vdash \St'_k:\Pi_k'$. It remains to prove that $\Pi'=\Pi_1\sqcup...\sqcup\Pi'_k\sqcup...\sqcup\Pi_n$. This fact follows from the fact that  $\St_1,...,\St_n$ are subsequences of $\St$ that, taken in pairs, do not have variables with linear values in common, except those of base type.

Suppose now that $\St'=\St\sim_{\sq_1,...,\sq_m} \x_1,...,\x_m,\x=\q'\ b$ and $\Pi = \Delta_0,(\x_j:\sq_j\ \B_j)_j,\Delta$ (case basic operator). Without loss of generality (for a simpler notation) we can assume that $\sq_j=\ql$ for all $j$.
By Corollary Store (applied $m$ times) $\vdash \St\sim_{\sq_1,...,\sq_m} \x_1,...,\x_m:\Delta_0,\Delta$. Then, take $\Pi'=\Delta_0,\Delta, \x:\q'\ \B$ to obtain $\vdash\St':\Pi'$.

It remains to check that $\Pi'\vdash\e[\e'_0]:\T$. It is trivial  except in the case $\q'=\ql$. Since $\Pi=\Pi_1\sqcup...\sqcup\Pi_n$, by Lemma Split there exist $\St_1,...,\St_n$ such that  $\vdash \St_j:\Pi_j$, for all $j=1,...,n$. By the  induction hypothesis  there exist  $\Pi'_k,\St'_k$ such that $\Pi'_k\vdash \e_k[\e'_0]:\T_k$ and $\vdash \St'_k:\Pi_k'$. To conclude the case we must to take $\St'_j=(\St_j,\x=\ql \ b)$ and $\Pi_j=(\Pi_j,\x:\qh\ \B)$, for $j=1,,,.k-1$.  Thus we can prove that  $\Pi'=(\Pi_1,\x:\qh\ \B)\sqcup...\sqcup (\Pi_{k-1},\x:\qh\ \B)\sqcup\Pi'_k\sqcup...\sqcup\Pi_n$. 
\end{proof}

\begin{cor}[Progress] If $\vdash(\St,\e)$ and $\e$ is not a  variable, then there exist  $\St',\e'$ such that  $(\St,\e)\rightarrow(\St',\e')$.
\end{cor}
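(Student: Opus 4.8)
The plan is to reduce Progress to the $\beta$-progress corollary: I would expose a $\beta$-node inside $\e$ through an evaluation context, localise the typing at that node with Lemma Context, fire the reduction there, and lift it back with the context rule (cct) of Figure 7.

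First I would establish a purely syntactic \emph{decomposition lemma}: every expression $\e$ that is not a variable can be written $\e = \e[\e_0]$ for some evaluation context $\e[]$ and some $\beta$-node $\e_0$. The proof is a structural induction on $\e$, dispatched by the outermost constructor. If $\e$ is already a $\beta$-node, take $\e[] = []$; this absorbs every value $\q\ \w$ (so $\q\ \op$, $\q\ \La\x,...,\x\Ra$ and $\q\ \lambda\x:\T.\e_1$ — there is no evaluation context reaching under a $\lambda$), every $\op^\tau(\x_1,...,\x_n)$ all of whose arguments are variables (including the $0$-ary constant case $\op^\tau(\,)$), and the forms $\If\ \x\ \Then\ \cdots$, $\Spl\ \x\ \As\ \p\ \In\ \cdots$, $\x\ \x'$ and $\Let\ \x\equiv\y\ \In\ \cdots$. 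Otherwise some immediate subexpression occupying a hole-admitting position is not a variable; pick the left-most such subexpression, apply the induction hypothesis to it to obtain an inner context, and plug that into the corresponding production of the $\e[]$-grammar of Figure 7 — $\op^\tau(\x_1,...,\e_i[],...,\e_n)$, $\q\ \La\x_1,...,\e_i[],...,\e_n\Ra$, $\x\ \e[]$ (and, for an application with non-variable function, the function-position context), $\If\ \e[]\ \Then\ \cdots$, $\Spl\ \e[]\ \As\ \cdots$, $\Let\ \x\equiv\e[]\ \In\ \cdots$.

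With the decomposition in hand the rest is routine. Given $\vdash(\St,\e)$ with $\e$ not a variable, unfold it as $\vdash\St:\Pi$ and $\Pi\vdash\e:\T$, and write $\e = \e[\e_0]$ with $\e_0$ a $\beta$-node. By Lemma Context there are $\St_0,\Pi_0,\T_0$ with $\St_0$ a subsequence of $\St$, $\vdash\St_0:\Pi_0$ and $\Pi_0\vdash\e_0:\T_0$; thus $\vdash(\St_0,\e_0)$. The $\beta$-progress corollary then gives a step $(\St_0,\e_0)\rightarrow(\St_0',\e_0')$, which — since $\e_0$ is a $\beta$-node and no nontrivial evaluation context can occur within a $\beta$-node — is necessarily produced at the top, i.e. $(\St_0,\e_0)\rightarrow_\beta(\St_0',\e_0')$. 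Because $\St_0$ is a subsequence of $\St$ and every rule of Figure 8 only inspects and deallocates explicitly named store entries (otherwise appending a single fresh binding), the same rule fires from the full store: $(\St,\e_0)\rightarrow_\beta(\St',\e_0')$, where $\St'$ agrees with $\St_0'$ up to the fresh variable returned by $new\ \St$ in rules (eva) and (eop). Finally the context rule (cct) yields $(\St,\e[\e_0])\rightarrow(\St',\e[\e_0'])$, so $\St'$ together with $\e[\e_0']$ witnesses the claim.

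The step I expect to be the main obstacle is the decomposition lemma — not because it is conceptually hard, but because the decomposition must land \emph{exactly} inside the grammar of evaluation contexts of Figure 7: one must always pick the left-most non-variable argument in the $\op^\tau$ and $\q\ \La\cdots\Ra$ cases so that the hole is preceded only by variables, and handle the application constructor so that the function is brought to variable form before the argument is entered. Once this is pinned down, Lemma Context supplies a well-typed sub-configuration at the redex, $\beta$-progress (a corollary of the Well-typing Lemma) supplies the local reduction, and (cct) lifts it to the whole term.
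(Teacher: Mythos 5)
Your proposal is correct and follows exactly the route the paper takes: its entire proof of Progress reads ``Follows lemma Context and corollary $\beta$-progress'', and your decomposition lemma (every well-typed non-variable expression is some $\e[\e_0]$ with $\e_0$ a $\beta$-node), followed by Lemma Context, $\beta$-progress, and the lifting through the context rule, is precisely the content being elided there. One caveat worth recording: your decomposition requires an evaluation context with the hole in the \emph{function} position of an application, i.e.\ a production $\e[]\ \e$, which is absent from the grammar of Figure 7 as printed --- without it an application whose function part is not yet a variable (e.g.\ $(\qu\ \lambda\x\!:\!\T.\x)\ \y$) is well typed, is not a $\beta$-node, and admits no decomposition, so Progress would actually fail; you correctly identify that this production is needed, so the gap lies in Figure 7 rather than in your argument.
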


\begin{proof} Follows lemma Context and corollary $\beta$-progress. 
\end{proof}

\section{Performance of weak-linearity}
\label{Rendimientodelalinealidaddebil}

We are going to show some examples of well-typed weak-linear programs that will allow us to quantify the efficiency in the use of memory resources that weak-linearity provides.

For a better readability of the program, we remove the qualifiers from each operation $\op^\tau$ (we just write $\op$). In the tables $fib$, $mapa$, $map$ and $sort$ we write on the left $\St$, $\e$ and $\Pi$, and on the right the signature $\Sigma^\q$ with its operators in order of appearance. In all cases $\vdash(\St,\e)$ is verified.

To test weak-linearity in non-trivial programs, we are going to allow recursive functions by adding self-referring definitions in the store $\St$ (they need to be unrestricted). It is not necessary to change the small-step semantics, but it is necessary to add the following rule for store typing:
\[
\begin{array}{c}
\underline{\vdash \St: \Pi\quad \Pi^\qu,\f:\qu\ (\T\rightarrow\T')\vdash \va:\qu\ (\T\rightarrow\T')}\\
\vdash (\St,\f=\va):(\Pi,\f:\qu\ (\T\rightarrow\T'))\\
\end{array}
\]

On the other hand, we can add recursive types using a similar approach to tuples. We add lists to our language using the syntax shown in Figure 9.

\smallskip

\begin{center}
\begin{tabular}{|c |}
\hline
$
\begin{array}{llllll}
&\Pt \ ::= \ \q\ \List\ \T & \mathsf{list\ pretype}\smallskip\\
&\w \ \ ::=  \ []\ |\ (\x:\x)&\mathsf{list\ prevalues}\smallskip\\
&\e \ \ ::=  \ \q\ \![]\ |\ \q\ (\e:\e)\ |\  \Case\ \e\ \Of\ (\e,(\x\!:\!\x)\!\rightarrow\!\e)\quad&\mathsf{list\ expressions}\smallskip\\
&\e[]\ ::=\ \q\ (\e[]\!:\!\e)\ |\ \q\ (\x\!:\!\e[])\ | \  
\Case\ \e[]\ \Of\ (\e,(\x:\x)\!\rightarrow\!\e) & \mathsf{list\ context}\quad\bigskip\\
\mathsf{(eca)} & (\St,\Case\ \x\ \Of\ (\e_0,(\z_1,\z_2)\rightarrow\e_1) \rightarrow_\beta\smallskip \\
& \qquad\qquad\qquad\qquad\qquad\qquad (\St\sim_\q\x,\e_0)& (\St\x=\q\ [])\\
  & \qquad\qquad\qquad\qquad\qquad\qquad(\St\sim_\q\x,[\langle\z_1,\z_2\rangle\mapsto\langle\x_1,\x_2\rangle]\e_1)&(\St\x=\q\ (\x_1:\x_2))\smallskip\\

\end{array}
$\\
\hline
\end{tabular}
\tiny Figura 9: Syntax and semantics of lists
\end{center}

\medskip

We will denote by $C^\ql_{\f} (n)$ the balance\footnote{Number of memory locations created minus the number of memory locations deleted.} of memory locations used by the function $\f$, where $n$ is the number of recursive calls. Each memory location that holds an integer, boolean, list element or tuple counts 1, while an array data counts according to its size. We will use $C^\qu_{\f} (n)$ for the cost of the unrestricted version of $\f$ (all its qualifiers are $\qu$).

In the first program $fib$, the identity function $\id$ is used with the sole objective of allowing a hidden use of the object $\w$, thus ensuring that the program returns two weak-linear data. On the other hand, the variable $\x$ must be returned to guarantee linearity ($\x$ must be used).
In the same way, the operator $\pi_1$ in $map$ allows the effective use of $\x$ and $\n$ to proceed with its deallocation (if we only put $\ql\ []$, the program would not be well typed).

The funtions $\Fib$, $\Map$ and $\Mapa$ are examples of algorithms in which weak-linearity allows optimal efficiency. For these programs we have that $C^\ql (n)$ is $\mathcal{O}(1)$, while $C^\qu (n)$ is $\mathcal{O}(n)$, for $\Fib$ and $\Map$, and $\mathcal{O}(n^2)$ for $\Mapa$.

$\Sort$ shows a significant efficiency in the weak-linear version, and simultaneously shows the impossibility of obtaining a weak-linear version $\mathcal{O}(1)$, due to the strongly nonlinear use of some variables of type integer.
We observe that $C^\ql_{\mathit{\Sort}} (n)$ is $\mathcal{O}(n)$, while $C^\qu_{\mathit{\Sort}} (n)$ is $\mathcal{O}(n^3)$.

\medskip

\noindent\begin{center}
$fib$

\begin{tabular}{|c |c |}
\hline
$
\tiny\begin{array}{llll}
\St &\Fib\  =\ \lambda\x:\ql\ \Int.\\
&\qquad\quad\quad \If\ \x=0\\
&\qquad\quad\quad \Then\ \ql\  \La \x,1,1\Ra\\
&\qquad\quad\quad \Else\ \Spl\ \Fib(\x-1)\  \As \ \La\x,\w,\y\Ra\ \In\ \ql\ \La\x, \id\ \y,\w+\y\Ra  \\
\e &\Fib\ n\smallskip\\
\Pi& \Fib : \ql\ \Int\rightarrow\La\ql\ \Int,\ql\ \Int,\ql\ \Int\Ra\\
\end{array}
$
&
$
\tiny\begin{array}{lcll}
\Sigma^\q&
(=)& : & (\qh\ \Int,\ql\ \Int)\rightarrow\ql\ \Bool\\
&0  & : & \ql\ \Int\\
&1  & : & \ql\ \Int\\
&1  & : & \ql\ \Int\\
&(-)& : & (\ql\ \Int,\ql\ \Int)\rightarrow\ql\ \Int\\
&\id& : & \qh\ \Int\rightarrow\ql\ \Int\\
&(+)& : & \La\ql\ \Int,\ql\ \Int\Ra\rightarrow\ql\ \Int\qquad\quad\\
& n & : & \ql\ \Int
\end{array}
$\\
\hline
\end{tabular}\\
\end{center}

\medskip

\noindent\begin{center}
$mapa$

\begin{tabular}{|c |c |}
\hline
$
\tiny\begin{array}{llll}
\St&\ar = \ql\  \left\lbrace 0,1,2,3,...,n-1\right\rbrace,\ \\
& \Mapa = \qu\ (\lambda\ \f:\qu\ (\ql\ \Int\rightarrow\ql\ \Int).\\
&\qquad\quad \ql\ (\lambda\w:\ql\ \La\ql\ \Array,\ql\ \Int,\ql\ \Int\Ra.\\
&\qquad\qquad\Spl\ \w\ \As \ \La\ar,\vi,\n\Ra\ \In\\
&\qquad\qquad\qquad\If\ \vi=\n\ \Then\ \ql\ \La\ar,\vi,\n\Ra\\
&\qquad\qquad\qquad\Else\ \Let\ \z:\ql\ \Int \equiv\ar[\vi]\\
&\qquad\qquad\qquad\quad\In\ \Mapa (\ql\ \La\ar[\vi\leftarrow\f\ \z],\vi+1,\n\Ra))\\
\e & \Mapa\ ( \qu\ (\lambda\x:\ql\ \Int.\ \x+1))\ (\ql\ \La\ar,0,n\Ra)\smallskip\\
\Pi & \ar:\ql\ \Array,\ \Mapa:\\
& \qu\ (\ql\ \La\ql\ \Array,\ql\ \Int,\ql\ \Int\Ra\rightarrow\ql\ \La\ql\ \Array,\ql\ \Int,\ql\ \Int\Ra)
\end{array}
$
&
$
\tiny\begin{array}{lcll}
\Sigma^\q
& (=)& : & \La\qh\ \Int,\qh\ \Int\Ra\rightarrow\ql\ \Bool\\
& \cdot[\ \cdot\ ] & : & \La\qh\ \Array, \qh\ \Int\Ra\rightarrow\ql\ \Int\\
& \cdot[\cdot\leftarrow\cdot ] & : & \La\ql\ \Array, \qh\ \Int, \ql\ \Int\Ra\rightarrow\ql\ \Array\\
& (+) & : & \La\ql\ \Int,\ql\ \Int\Ra\rightarrow\ql\ \Int\\
 & 1 & : & \ql\ \Int\\
 &(+)& : & \La\ql\ \Int,\ql\ \Int\Ra\rightarrow\ql\ \Int\\
 & 1 & : & \ql\ \Int\\
 & 0 & : & \ql\ \Int\\
 & n & : & \ql\ \Int\\
\end{array}
$\\
\hline
\end{tabular}\\
\end{center}

\medskip

\noindent\begin{center}
$map$

\begin{tabular}{|c |c |}
\hline
$
\tiny\begin{array}{llll}
\St&\List\ =\ \qu\ (\lambda \p: \ql\ \La\ql\ \Int,\ql\ \Int\Ra.\\
&\qquad\quad  \Spl\ \p\ \As\ \La\n,\x\Ra\ \In\\
&\qquad \qquad \If\ n==0 \ \Then\ \pi_1(\pi_1(\ql\ \![],\n),\x)\\
& \qquad\qquad\Else\ \ql\ (\id(\x):\List\ \La \n-1,\x+1\Ra),\ \\
& \Map = \qu\ (\lambda\ \f:\qu\ (\ql\ \Int\rightarrow\ql\ \Int).\\
&\qquad\quad \ql\ (\lambda\xs:\ql\ \List\ (\ql\ \Int).\\
&\qquad\qquad\Case\ \xs\ \Of \ (\ql\ \![],(\z\!:\!\zs)\!\rightarrow\!\ql\ (\f\ \z:\Map\ \f\ \zs) \\
\e & \Map\ ( \qu\ (\lambda\x:\ql\ \Int.\ \x+1))\ (\ql\ \La n,0\Ra)\smallskip\\
\Pi & \List:\qu\ (\ql\ \La\ql\ \Int,\ql\ \Int\Ra\rightarrow\ql\ \List\ (\ql\ \Int)),\ \Map:\\
& \qu\ (\qu\ (\ql\ \Int\rightarrow\ql\ \Int)\rightarrow\ql\ \List\ (\ql\ \Int)\rightarrow\ql\ \List\ (\ql\ \Int))
\end{array}
$
&
$
\tiny\begin{array}{llcll}
\Sigma^\q
&== & : & \qh\ \Int\rightarrow\ql\ \Bool\\
&\pi_1 & : & (\ql\ [\ql\ \Int],\ql\ \Int)\rightarrow\ql\ [\ql\ \Int]\qquad\\
&\pi_1 & : & (\ql\ [\ql\ \Int],\ql\ \Int)\rightarrow\ql\ [\ql\ \Int]\\
&\id & : & \qh\ \Int\rightarrow\qu\ \Int\\
&- & : & (\ql\ \Int,\ql\ \Int)\rightarrow\ql\ \Int\\
&1 & : & \ql\ \Int\\
&+ & : & (\ql\ \Int,\ql\ \Int)\rightarrow\ql\ \Int\\
&1 & : & \ql\ \Int\\
&+ & : & (\ql\ \Int,\ql\ \Int)\rightarrow\ql\ \Int\\
&1 & : & \ql\ \Int\\
&n & : & \ql\ \Int\\
&0 & : & \ql\ \Int\smallskip\\
\end{array}
$\\
\hline
\end{tabular}\\
\end{center}

\medskip


\noindent\begin{center}
$sort$

\begin{tabular}{|c |c |}
\hline
$
\tiny\begin{array}{llll}
\St & \ar= \ql\  \left\lbrace n-1,...,1,0\right\rbrace,\\
& \Swap =\qu\ (\lambda\w:\ql\ \La\ql\ \Array,\qu\ \Int,\qu\ \Int\Ra.\\
& \qquad\qquad\quad\Spl\ \w\ \As\
\La\ar,\vi,\vj\Ra\ \In\\
& \qquad\qquad\qquad\Let\ \z\equiv\ar[\vi]\ \In\ \Let\ \y\equiv\ar[\vj]\ \In\\
& \qquad\qquad\qquad \Let\ \ar\equiv\ar[\vi\leftarrow\y]\ \In\ \Let\ \ar\equiv\ar[\vj\leftarrow\w])\\
& \Ins =\qu\ (\lambda\z.\ \Spl\ \z\ \As \ \La\ar,\vi\Ra\ \In\\
&\qquad\qquad\qquad\If\ \vi=0\ \Then\ \ar\\
&\qquad\qquad\qquad\Else\ \Let\ \vj \equiv\vi-1\ \In\\
&\qquad\qquad\qquad\qquad\If\ \ar[\vi]<\ar[\vj] \\
&\qquad\qquad\qquad\qquad\Then\ \Ins (\ql\ \La\Swap\ (\ql\ \La\ar,\vi,\vj\Ra),\vj\Ra)\\
&\qquad\qquad\qquad\qquad\Else \
\Ins (\ql\ \La\ar,\vj\Ra),\\
& \Sort= \qu\ (\lambda\w.\ \Spl\ \w\ \As \ \La\ar,\vi,\vj\Ra\ \In\\
&\qquad\qquad\qquad\If\ \vi=\vj\ \Then\ \ar\\
&\qquad\qquad\qquad\Else\
\Sort (\ql\ \La\Ins\ (\ql\ \La\ar,\vi\Ra),\vi+1,\vj\Ra)\\
\e & \Sort (\ql\ \La\ar,0,n\Ra)\\
\Pi & \ar:\ql\ \Array,\ \Swap : \ql\ \La\ql\ \Array,\qu\ \Int,\qu\ \Int\Ra\rightarrow\ql\ \Array,\\
& \Ins   : \La\ql\ \Array,\qu\ \Int\Ra\rightarrow\ql\ \Array,\\ &\Sort  : \La\ql\ \Array,\qu\ \Int,\qu\ \Int\Ra\rightarrow\ql\ \Array
\end{array}
$
&
$
\tiny\begin{array}{lrll}
\Sigma^\q
& \cdot[\ \cdot\ ]  & : & \La\qh\ \Array, \qu\ \Int\Ra\rightarrow\ql\ \Int\\
& \cdot[\ \cdot\ ]  & : & \La\qh\ \Array, \qu\ \Int\Ra\rightarrow\ql\ \Int\\
&  \cdot[\cdot\leftarrow\cdot ] & : & \La\ql\ \Array, \qu\ \Int, \ql\ \Int\Ra\rightarrow\ql\ \Array\\
&  \cdot[\cdot\leftarrow\cdot ]& : & \La\ql\ \Array, \qu\ \Int, \ql\ \Int\Ra\rightarrow\ql\ \Array\\
& (=)& : & \La\qu\ \Int,\ql\ \Int\Ra\rightarrow\ql\ \Bool\\
& 0 & : & \ql\ \Int\\
& (-)& : & \La\qu\ \Int,\ql\ \Int\Ra\rightarrow\qu\ \Int\\
& 1 & : & \ql\ \Int\\
& \cdot[\ \cdot\ ]  & : & \La\qh\ \Array, \qu\ \Int\Ra\rightarrow\ql\ \Int\\
& (<)& : & \La\ql\ \Int,\ql\ \Int\Ra\rightarrow\ql\ \Bool\\
& \cdot[\ \cdot\ ]  & : & \La\qh\ \Array, \qu\ \Int\Ra\rightarrow\ql\ \Int\\
& (=)& : & \La\qu\ \Int,\ql\ \Int\Ra\rightarrow\ql\ \Bool\\
& (+)& : & \La\qu\ \Int,\ql\ \Int\Ra\rightarrow\qu\ \Int\\
& 1 & : & \ql\ \Int\\
& 0& : & \qu\ \Int\\
& n& : & \qu\ \Int\\
\end{array}
$\\
\hline
\end{tabular}\\
\end{center}

\section{Conclusions}
\label{conclusiones}


In this work, we present a language that supports a weakened form of linearity, while preserving the simplicity and elegance of the original linear system. We extend the language presented in \cite{walker} by introducing a qualified signature. The only modification we introduce to this linear system is to add the qualifier $\qh$, which allows read-only access to a base linear data, and introduce the pseudo-split of contexts. We only use the traditional split for basic operators. The key is that read-only access to a base linear data is only performed when it appears as a hidden input of a basic operation of the qualified signature. Case studies are shown in which we can observe a significant improvement in the use of memory resources. 

We consider the contribution of weak-linear types valuable because it provides theoretical clarity in the following sense: a type system is obtained that weakens the notion of linearity and simultaneously preserves in its definition the original idea of (only) weakening structural principles through the introduction of context splitting.


We consider that the work proposes a theoretical framework that expresses in a simple way some benefits of linear systems, and that it can be an adequate framework to study the relationship between substructurality and in-place update, a relationship that, although it has been addressed in numerous works, still presents clear challenges.


\bibliographystyle{alphaurl}
\bibliography{wlt}

\begin{thebibliography}{AFKT03}

\bibitem[AFKT03]{aiken}
A.~Aiken, J.~Foster, J.~Kodumal, and T.~Terauchi.
\newblock {\em Checking and inferring local non-aliasing. In ACM SIGPLAN Conference on Programming Language Design and Implementation (PLDI)}.
\newblock San Diego, California, pages 129-140, 2003.

\bibitem[AH02]{aspinall}
D.~Aspinall and M.~Hofmann.
\newblock {\em Another Type System for In-Place Update}.
\newblock D. Le Metayer (Ed.): ESOP 2002, LNCS 2305, pp. 36-52, 2002.

\bibitem[CP02]{cervesato}
I.~Cervesato and F.~Pfenning.
\newblock {\em A linear logical framework}.
\newblock Information and Computation, 179(1):19-75, 2002.

\bibitem[Dos02]{dosen}
K.~Dosen.
\newblock {\em A historical introduction to substructural logics}.
\newblock In ACM SIGPLAN Conference on Programming Language Design and Implementation (PLDI), Berlin, Germany, pages 1-12, 2002.

\bibitem[FTA02]{foster}
J.~Foster, T.~Terauchi, , and A.~Aiken.
\newblock {\em Flow-sensitive type qualifiers}.
\newblock In ACM SIGPLAN Conference on Programming Language Design and Implementation (PLDI), Berlin, Germany, pages 1-12, 2002.

\bibitem[Gir87]{girard}
J~Girard.
\newblock {\em Linear logic}.
\newblock Theoretical Computer Science 50 1-102, 1987.

\bibitem[Kob99]{kobayashi}
N.~Kobayashi.
\newblock {\em Quasi-Linear Types}.
\newblock In Proceedings ACM Principles of Programming Languages, pages 29-42, 1999.

\bibitem[Ode92]{odersky}
M.~Odersky.
\newblock {\em Observers for linear types}.
\newblock In B. Krieg-Bruckner, editor, ESOP ?92: 4th European Symposium on Programming, Rennes, France, Proceedings, pages 390-407. Springer-Verlag, February 1992. Lecture Notes in Computer Science 582, 1992.

\bibitem[SWM00]{smith}
F.~Smith, D.~Walker, and G.~Morrisett.
\newblock {\em Alias types}.
\newblock In European Sym-posium on Programming (ESOP), Berlin, Germany, volume 1782 of Lecture Notes in Computer Science, pages 366-381. Springer-Verlag, 2000.

\bibitem[Wad90]{wadler}
P.~Wadler.
\newblock {\em Linear types can change the world!}
\newblock In IFIP TC 2 Working Conference on Programming Concepts and Methods, Sea of Galilee, Israel, 1990.

\bibitem[Wal05]{walker}
D.~Walker.
\newblock {\em Subestructural Types Systems}.
\newblock in Advanced Topics in Types and Programming Languages. Benjamin C. Pierce, editor. The MIT Press, Cambridge, Massachusetts London, England, 2005.

\end{thebibliography}
\end{document}